\documentclass[runningheads]{llncs}
\usepackage[lined,boxed,commentsnumbered,ruled,vlined,noend,linesnumbered,boxed]{algorithm2e}
\usepackage{float}
\usepackage{amssymb}
\usepackage{amsmath}
\usepackage{mathtools}
\usepackage{relsize}
\usepackage{todonotes}
\usepackage{hyperref}

\usepackage{framed}
\usepackage{url}
\usepackage{cite}
\usepackage{comment}
\usepackage{tcolorbox}
\tcbuselibrary{skins}
\usepackage{xcolor}
\colorlet{mix}{red!50!black}
\usepackage{hyperref}
\hypersetup{colorlinks={true},linkcolor={blue},citecolor=blue}
\newtheorem{rr}{Reducion Rule}
\usepackage{enumitem}
\usepackage{comment}
\usepackage{tikz}
\usepackage{epsfig}
\usepackage{wrapfig}
\usepackage{xspace}
\usetikzlibrary{shapes,backgrounds, patterns}
\usepackage{lipsum}
\usepackage[symbol]{footmisc}

\usepackage{comment}

\newcommand{\KFProb}[0]{{\sf KFVD}\xspace}

\bibliographystyle{plain}

\title{An Improved  Exact Algorithm for Knot-Free Vertex Deletion}
\titlerunning{Knot-Free Vertex Deletion}

\author{Ajaykrishnan E S\inst{1}
\and Soumen Maity\inst{1}
\and Abhishek Sahu\inst{2}
\and  \\Saket Saurabh\inst{3,4} 
}

\authorrunning{E S.\,Ajaykrishnan et al.}

\institute{Indian Institute of Science Education and Research, Pune, India \and National Institute of Science Education and Research, An OCC of Homi Bhabha National Institute, Bhubaneswar, India \and The Institute of Mathematical Sciences, Chennai, India \and University of Bergen, Bergen, Norway\\
\email{\texttt{ajaykrishnan.es@students.iiserpune.ac.in}};
\email{\texttt{soumen@iiserpune.ac.in}};
\email{\texttt{abhisheksahu@niser.ac.in}};
\email{\texttt{saket@imsc.res.in}}\\
}


\colorlet{bscolor}{blue}

\DeclareUnicodeCharacter{B0}{\textless}

\begin{document}
\maketitle

\begin{abstract}

\newcommand{\bigoh}[0]{{\mathcal O}}

A knot $K$ in a directed graph $D$ is a strongly connected component of size at least two such that there is no  arc $(u,v)$ with $u \in V(K)$ and $v\notin V(K)$. Given a directed graph $D=(V,E)$, we study {\sc Knot-Free Vertex Deletion} (\KFProb), where the goal is to remove the minimum number of vertices such that the resulting graph contains no knots.  This problem naturally emerges from its application in deadlock resolution since knots are deadlocks in the OR-model of distributed computation. The fastest known exact algorithm in literature for {\KFProb}  runs in time $\mathcal{O}^\star(1.576^n)$. In this paper, we present an improved exact algorithm running in time $\mathcal{O}^\star(1.4549^n)$, where $n$ is the number of vertices in $D$. We also prove that the number of inclusion wise minimal knot-free vertex deletion sets is $\mathcal{O}^\star(1.4549^n)$ and construct a family of graphs with $\Omega(1.4422^n)$ minimal knot-free vertex deletion sets.\vspace*{2mm}
    \keywords{exact algorithm, knot-free graphs, branching algorithm, measure and conquer}
\end{abstract}

\section{Introduction}
The paper by Held and Karp \cite{10.1145/800029.808532} in the early sixties, sparked an interest in designing (fast) exact exponential algorithms. In the last couple of decades there has been immense progress in this field, resulting in non-trivial exact exponential algorithms \cite{10.1145/800029.808532, 10.1007/11602613_58, Razgon2007ComputingMD, lima2018and, 10.1007/978-3-030-64843-5_21, 10.1145/2897518.2897551, bjrklund:LIPIcs:2017:6948}. Alongside optimisation problems, it is also of interest to find exact exponential algorithms for enumeration problems. These are useful in answering natural questions that arise in Graph Theory, about the number of minimal (maximal) vertex subsets that satisfy a given property in a graph of order $n$. One can prove a trivial bound of $\mathcal{O}(2^{n}/\sqrt{n})$, but better bounds are known only for relatively few problems. The celebrated result of Moon, Moser \cite{moon_moser} proving an upper bound of $\mathcal{O^{\star}}(1.4422^n)$ for maximal independent sets, alongside that of Fomin et al. for Feedback Vertex Set\cite{10.1145/800029.808532} and Dominating Set\cite{10.1007/11602613_58} are some examples. We refer to the monograph by Fomin and Kratsch \cite{fomin_exact} for a detailed survey of the techniques and results in the field. 

A \emph{knot} in a directed graph $D = (V,E)$ is a strongly connected component of size atleast 2, without any out-edge. For a given input digraph $D$, the problem of finding the smallest vertex subset $S\subseteq V$ whose deletion makes $D$ Knot-Free is known as the {\sc Knot-Free Vertex Deletion} (KFVD) Problem. It finds applications in resolution of deadlocks in a classical distributed computation model referred to as the OR-model.

A distributed system consists of independent processors which are interconnected via a network which facilitates resource sharing. It is typically represented using a wait-for digraph $D = (V,E)$, where the vertex set $V$ represents processes and edge set $E$ represents wait-conditions. A deadlock occurs when a set of processes wait indefinitely for resources from each other to take actions such as sending a message or releasing a lock. Deadlocks \cite{10.1007/s10878-018-0279-5} are a common problem in multiprocessing systems, parallel computing and distributed systems.

The AND-model and OR-model are well investigated deadlock models in literature \cite{https://doi.org/10.1002/net.21537}. In the AND-model the process corresponding to a vertex $v$ in the wait-for graph can start only after the processes corresponding to its out-neighbours are completed, while in the OR-model $v$ can begin if atleast one of its out-neighbours has finished execution. Deadlocks in the AND-model correspond to cycles and hence deadlock prevention via preempting processes becomes equivalent to the {\sc Directed Feedback Vertex Set} problem in the wait-for graph, which has received considerable attention from various algorithmic viewpoints \cite{10.1145/1374376.1374404, Razgon2007ComputingMD, 10.1145/3446969, doi:10.1137/1.9781611976465.14}. Meanwhile, the deadlock prevention in OR-models which is equivalent to KFVD, has only been explored recently \cite{DBLP:conf/iwpec/BessyBCPS19, ramanujan2022exact}.

\subsection{Our Contributions}
In this paper, we build upon the work of Ramanujan et al. \cite{ramanujan2022exact} to obtain the following results. The algorithm is designed using the technique of {\em Branching} and its complexity is computed via {\em Measure \& Conquer}

\begin{theorem}
 There exists an algorithm for {\sc Knot-Free Vertex Deletion} running in $\mathcal{O}^*(1.4549^n)$. Furthermore, there is an infinite family of graphs for which the algorithm takes $\Omega(1.4422^n)$ time.
\end{theorem}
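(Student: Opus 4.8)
The plan is to prove the two assertions of the theorem independently: a branching algorithm, analysed by Measure \& Conquer, for the $\mathcal{O}^*(1.4549^n)$ upper bound, and an explicit infinite family of instances showing that this algorithm cannot run in time $o(1.4422^n)$.

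For the upper bound I would work from the characterisation that a digraph is knot-free if and only if every vertex can reach a sink (a vertex of out-degree $0$); thus $\KFProb$ asks for a minimum $S\subseteq V$ such that in $D-S$ every vertex reaches a sink of $D-S$. A first reduction rule, applied exhaustively, notes that the set $B$ of vertices that do \emph{not} reach a sink in $D$ has no arc into $V\setminus B$ and that some minimum solution is contained in $B$; so one may discard all other vertices and assume that $D$ has no sinks. Building on the machinery of Ramanujan et al.\ \cite{ramanujan2022exact} I would then add further reduction rules eliminating degenerate local configurations (e.g.\ vertices of out-degree $1$ and certain twin-like patterns), each proved to preserve the optimum and not to increase the measure. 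The engine is a branching rule applied to a carefully chosen vertex $v$ --- for instance a minimum out-degree vertex inside a current knot --- with one branch deleting $v$ and the other keeping $v$ and propagating the consequences of that commitment onto the out-neighbours of $v$. The measure is a weighted vertex count $\mu(D)=\sum_{v} w(\mathrm{state}(v))$, where the weight reflects whether $v$ is still undecided, has been committed to the complement of the solution, or is otherwise partially constrained; one checks that $\mu$ drops strictly in every branch, records the resulting branching vectors, and solves the associated optimisation over the weights to obtain a branching number $c\le 1.4549$ together with $\mu(D)\le n$, hence an $\mathcal{O}^*(1.4549^n)$ bound.

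For the lower bound I would consider the digraph $D_k$ consisting of $k$ vertex-disjoint copies of a fixed three-vertex gadget --- concretely a directed triangle $a\to b\to c\to a$, or a bidirected triangle if a reduction rule would otherwise simplify the former --- so that $n=3k$. Each copy is itself a knot whose only remedies are to delete one vertex (of the directed triangle) resp.\ two vertices (of the bidirected triangle); in particular there are exactly three locally feasible ways to treat a copy, yielding $3^{k}=3^{n/3}$ minimal knot-free deletion sets overall. I would then trace the algorithm on $D_k$: the gadget is chosen so that no reduction rule fires, the algorithm branches on a vertex of some copy, and the recursion on that copy unfolds into three surviving leaves (delete the first vertex; keep it and delete the second; keep both and delete the third), each of which fully resolves that copy and leaves the remaining $k-1$ copies untouched. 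Since the copies are independent, the search tree has at least $3^{k}$ leaves, so the algorithm runs in time $\Omega(3^{n/3})=\Omega(1.4422^n)$.

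The main obstacle is on the upper-bound side: one must engineer the reduction rules, the branching rule, and in particular the set of vertex states and their weights so that (a) everything stays correct despite the fact that deleting vertices can create new knots and new sinks simultaneously, and (b) the measure decreases enough in \emph{every} branch to bring the branching number down to $1.4549$; this will take a detailed case analysis of the local structure around the branching vertex together with a supporting weight optimisation. On the lower-bound side the only care needed is to choose the three-vertex gadget so that no reduction rule pre-empts the branching and to confirm that the branching rule genuinely produces three recursive calls on it.
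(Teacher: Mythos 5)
Your lower-bound half is essentially the paper's own argument: the paper also takes $n/3$ disjoint directed triangles $a_i\to b_i\to c_i\to a_i$ and shows, by an adversarial choice of the branching vertex, that the algorithm splits each triangle into three branches, each of which removes all three vertices, giving $3^{n/3}\geq 1.4422^n$ leaves. That part is fine (modulo the fact that it can only be verified against a concretely specified algorithm). Your preliminary reduction to the set $B$ of vertices that cannot reach a sink of $D$ is also correct and harmless, although the paper instead uses simpler source/sink reduction rules.

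The genuine gap is in the upper-bound half, which is the entire content of the theorem: you never specify the branching rules, the vertex states, the weights, or the case analysis, and simply assert that a weight optimisation yields branching number at most $1.4549$. Moreover, the scheme you sketch (branch on ``delete $v$'' versus ``keep $v$'', with states recording membership in the solution or its complement) is not the mechanism that achieves this bound. The paper's key idea is to branch on the \emph{sink status} of a vertex: it searches for the sink set $Z$ of $D-S$ (using $S=N^+(Z)$ for minimal solutions), assigns potential $1$ to vertices still eligible to be sinks and $0.25$ to vertices committed to be non-sinks, and exploits structural facts — if $x$ is a sink then $N^+(x)\subseteq S$, no solution vertex lies in $R^-(x)$, and no sink lies in $R^+(x)$ — so that the ``$x$ is a sink'' branch deletes all of $R(x)$ and demotes $R^+(x)\setminus R(x)$, while the ``$x$ is not a sink'' branch still gains the $0.75$ drop from demoting $x$. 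The worst branching vector $(3.75,0.75)$ (Subroutine~1), together with multi-way branches over candidate/surviving sink sets $\mathcal{C}_x$, $B_x$ bounded via the drop function $\psi$, is exactly what produces $1.4549$; without identifying this sink-centric branching and carrying out the case analysis (the bulk of Sections~3--5 of the paper), the claimed constant is unsupported. In short, your proposal acknowledges but does not close the ``main obstacle'', so the first assertion of the theorem is not proved.
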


\begin{theorem}
 The number of inclusion-wise minimal knot-free vertex deletion sets is $\mathcal{O}^*(1.4549^n)$  and there exists an infinite family of graphs with $\Omega(1.4422^n)$ many such sets.
\end{theorem}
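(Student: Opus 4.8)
The plan is to establish the two bounds separately. The upper bound of $\mathcal{O}^\star(1.4549^n)$ is obtained by reusing, with only minor modifications, the branching algorithm behind Theorem~1, and the lower bound of $\Omega(1.4422^n)$ by exhibiting an explicit sparse family.

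For the upper bound, I would use the standard observation that a measure-and-conquer branching algorithm whose only non-trivial step is a two-way branch ``$v$ goes into the deletion set'' versus ``$v$ is kept'' is, up to the way the two recursive calls are combined, an enumeration algorithm for the inclusion-wise minimal solutions. Concretely, take the algorithm of Theorem~1, remove every step that prunes a branch merely because it cannot attain a minimum-size solution, and let it recurse into both branches; then each leaf of the recursion tree fixes a partition of $V$ into a ``delete'' part and a ``keep'' part. Since the branching is exhaustive, for every inclusion-wise minimal knot-free vertex deletion set $S$ there is a root-to-leaf path along which exactly the vertices of $S$ are decided ``delete'', so $S$ equals the ``delete'' part of that leaf. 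Each leaf fixes a single ``delete'' part, so the inclusion-wise minimal solutions inject into the leaves, and the number of leaves is bounded by the measure-and-conquer analysis of Theorem~1 by $\mathcal{O}^\star(1.4549^n)$; a polynomial-time minimality check at each leaf turns this into an enumeration algorithm with the same running time.

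The step that needs care, and which I expect to be the only genuine obstacle, is verifying that the reduction rules used by the algorithm are sound for \emph{enumeration} and not merely for \emph{optimisation}: after exhaustively applying the reduction rules, the family of inclusion-wise minimal solutions of the original instance must be recoverable from that of the reduced instance by adding back the forced-in vertices. In particular, a reduction rule may force a vertex into the solution only if it lies in every inclusion-wise minimal solution, may freeze a vertex out of the solution only if it lies in no inclusion-wise minimal solution, and any tie-breaking between symmetric choices must be discarded. I would go through the reduction rules one by one and check this condition; the branching rules are exhaustive by design and the measure is unchanged, so the leaf count --- and hence the running time --- is exactly the one from Theorem~1.

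For the lower bound, let $D_k$ be the disjoint union of $k$ directed triangles, so $n = 3k$. Each triangle is a strongly connected component of size three with no out-arc, hence a knot, and disjointness of the components precludes any other knot. Deleting one vertex of a triangle leaves a directed path on two vertices, which is not strongly connected and hence not a knot, whereas deleting none of its vertices leaves the triangle a knot; therefore the inclusion-wise minimal knot-free vertex deletion sets of $D_k$ are exactly the $3^{k}$ sets that contain precisely one vertex from each triangle. Since $3^{k} = 3^{n/3} = (3^{1/3})^n$ and $3^{1/3} > 1.4422$, the infinite family $\{D_k\}_{k \ge 1}$ has $\Omega(1.4422^n)$ inclusion-wise minimal knot-free vertex deletion sets, as claimed; the same family also realises the running-time lower bound of Theorem~1.
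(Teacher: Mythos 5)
Your lower-bound construction is exactly the paper's: the disjoint union of $n/3$ directed triangles, whose inclusion-wise minimal deletion sets are precisely the $3^{n/3}$ transversals picking one vertex per triangle; that half is correct. For the upper bound your high-level strategy also matches the paper: charge every inclusion-wise minimal solution to a leaf of the recursion tree of the Theorem~1 algorithm and bound the number of leaves by the same measure-and-conquer analysis. However, there is a genuine gap in how you argue that every minimal solution actually survives to a leaf. The algorithm never branches on ``$v$ is deleted'' versus ``$v$ is kept''; it branches on which vertex becomes a \emph{sink} of the final knot-free graph (Subroutine~1: $x$ sink versus $x$ non-sink; Subroutines~2--4: which of several candidate sinks is chosen), and the deletion set at a leaf is only determined indirectly as the out-neighbourhoods of the chosen sinks plus the vertices forced in at the end. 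So the statement ``there is a root-to-leaf path along which exactly the vertices of $S$ are decided delete'' is not something the recursion tree even expresses, and it cannot be taken as given.

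More importantly, the branching is \emph{not} exhaustive by design, which is where the real work lies and which your plan to audit only the reduction rules would not cover. Subroutine~2 branches only over $\mathcal{C}_x$, Subroutine~3 prunes $B_x=R^+(y)\setminus D_x$ via the dominance rule $N^+(s_i)\subseteq N^+(s_j)$ and additionally marks earlier candidates as non-sinks, and Subroutine~4 branches on a single $s\in R^+(y)$. The content of the upper bound is precisely that each of these restricted branch sets intersects the sink set $Z_{min}$ of \emph{every} minimal solution, not merely that some optimal solution is preserved; this is what the paper's correctness lemmas for Subroutines~2--4 establish (they are deliberately phrased for arbitrary minimal solutions), and the paper's proof then follows the branch consistent with $Z_{min}$ at every node, obtaining a leaf with $Z_{leaf}\subseteq Z_{min}$, hence $S_{leaf}\subseteq S_{min}$, and concludes $S_{leaf}=S_{min}$ from minimality rather than from any claim that the leaf records $S$ directly. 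The reduction rules (isolated sources and sinks) are the easy part of enumeration-soundness; without the covering property of the pruned branchings and the minimality argument at the leaf, your injection of minimal solutions into leaves is asserted rather than proved.
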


\noindent Our algorithm uses the simple observation that, {\em a directed graph $D$ is knot-free if and only if every vertex in $D$ has a path to some sink}. Furthermore, for any given directed graph $D$ in order to find the optimal solution $S$ to the $KFVD$ problem, it suffices to find the set of sinks $Z$ in $D-S$ \cite{DBLP:conf/iwpec/BessyBCPS19}, since $Z = N^+(Z)$. Given an instance $I = (D,V_1,V_2)$, we pick a vertex $v\in V_1$ and branch on the possibility that $v$ is a sink vertex in some optimal solution. The measure of the instance is $\phi(I) = \lvert V_1\rvert + \frac{\lvert V_2 \rvert}{4}$. The elements of $V_2$ are non-sink vertices and whenever we conclude that a vertex in $V_1$ will not be a sink in the final solution, we shift it to $V_2$, giving us a potential drop. This is how we capture the drop in measure in the branch where $v$ is not a sink. In the other branch where $v$ is a sink, we prove that $N^+(v)$ must belong to the solution set. We further prove that the set of vertices that can reach $v$ after $N^+(v)$ is deleted, cannot contain solution vertices and can hence be removed. Finally, we also show that if $v$ can reach $u$ after $N^+(u)$ is deleted then $u$ cannot be a sink vertex in any optimal solution corresponding to this branch. The above observations allow us to delete vertices or move them from $V_1$ to $V_2$ which gives a potential drop. We do a "potential sensitive" branching, followed by an extensive case analysis in order to obtain the final algorithm. 

\subsection{Previous Work}

Consider the {\sc Directed Feedback Vertex Set} problem, where one has to find the smallest size vertex subset whose deletion makes the input digraph acyclic. Observe that acyclic graphs are also knot-free since they do not have strongly  connected components of size greater than $1$. Hence one way of solving KFVD is to use the solution obtained via an algorithm for DFVS, since the problem has received considerable attention in the past\cite{10.1145/1374376.1374404, 10.1145/3446969, Razgon2007ComputingMD, doi:10.1137/1.9781611976465.14}. But the size of the solution for KFVD could be considerably smaller than that of DFVS and hence it is of merit to focus on the problem directly. 

In 2019, Carnerio, Protti and Souza \cite{10.1007/s10878-018-0279-5} first investigated deadlock resolution in various models via arc or vertex deletion. It was shown that corresponding to the OR-model, the arc deletion problem is polynomial time solvable, while the vertex deletion (KFVD) remains NP-complete even for graphs of maximum degree $4$. They further showed that KFVD is solvable in $\mathcal{O}(m\sqrt{n})$ time in graphs of maximum degree $3$. The first and only non-trivial exact exponential algorithm for KFVD was presented by Ramanujan et al. \cite{ramanujan2022exact} and runs in time $\mathcal{O}^{\star}(1.576^n)$. From the aspect of Parameterised Complexity, Bessy et al. \cite{DBLP:conf/iwpec/BessyBCPS19} showed that KFVD parameterised by solution size is $W[1]$-hard but FPT parameterised by clique-width. The fine grained complexity of $KFVD$ with respect to treewidth ($2^{\mathcal{O}(tw)}$) given ETH and size of largest strongly connected component ($2^{\phi}n^{\mathcal{O}(1)}$) given SETH is also know via \cite{DBLP:conf/iwpec/BessyBCPS19, knot_finegrained}.\\

\noindent {\em Organization of the paper:} In section 2, we state and prove some theorems which are used throughout the paper. By section 3, we present the algorithm and prove its correctness in section 4. We analyse the time complexity in section 5 and provide a lower bound for the run time of the algorithm in section 6. Finally, section 7 is dedicated to proving the upper and a complimentary lower bound on the number of minimal KFVD sets in a graph of size $n$.
\section{Preliminaries  and Auxiliary Results}
In this section we state some commonly used definitions, notations and useful auxiliary results. We also formalize the potential function which is integral to our algorithm and finally state and prove a few reduction rules which are used throughout the paper.\\

\noindent 
{\bf Notation:}  
For a set $S \subseteq V(D)$, $G[S]$ denotes the subgraph of $D$ induced on $S$ and $G[D-S]$ denotes the subgraph induced on $V(D)\setminus S$. A {\em path} $P_{(u,w)}$ from $u$ to $w$ of length $\ell$ is a sequence of distinct vertices $v_1,v_2,\ldots,v_{\ell}$ such that $(v_i,v_{i+1})$ is an arc, for each $i,\ i\in [\ell-1]$ and $v_1=u$, $v_\ell=w$. We define the \emph{in-reachability set} of a vertex $v$ denoted by $R^-(v)$, as the set of vertices that can reach $v$ via some directed path in $D - N^+(v)$. Notice that $v\in R^-(v)$. 
We define $R(v)=N^+(v)\cup R^-(v)$. For graph-theoretic terms and definitions not stated explicitly here, we refer to \cite{diestel-book}. 

\subsection{Auxiliary Results}
In this subsection, we first state some of the known reduction rules and some new ones that we use in our branching algorithm. 

\begin{proposition}{\rm\cite{DBLP:conf/iwpec/BessyBCPS19}\label{prop:1}}
    A digraph $D$ is knot-free if and only if for every vertex $v$ of $D$, $v$ has a path to a sink.
\end{proposition}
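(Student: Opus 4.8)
The plan is to prove both directions by contraposition, using the decomposition of $D$ into strongly connected components.

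For the ``if'' direction I would argue the contrapositive: assuming $D$ is not knot-free, I exhibit a vertex with no path to a sink. Let $K$ be a knot of $D$ and pick any $v \in V(K)$. Since $K$ is strongly connected with $|V(K)| \ge 2$, every vertex of $K$ has an out-neighbour inside $K$, so no vertex of $K$ is a sink. Moreover, because no arc leaves $K$, the out-neighbourhood of each vertex of $K$ is contained in $V(K)$, so by a straightforward induction on path length every vertex reachable from $v$ lies in $V(K)$. Hence $v$ has no path to a sink.

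For the ``only if'' direction I again use the contrapositive: suppose some vertex $v$ has no directed path to a sink, and produce a knot. Let $R$ be the set of vertices reachable from $v$ (so $v \in R$). The structural point is that $R$ is closed under taking out-neighbours: if $u \in R$ then $N^+(u) \subseteq R$. Consequently a vertex of $D[R]$ with no out-arc inside $D[R]$ has no out-arc in $D$, and a sink of $D[R]$ is a sink of $D$. Now pass to the condensation of $D[R]$, the DAG whose nodes are the strongly connected components of $D[R]$. As a finite non-empty DAG it has a sink node, say the component $C$, which by construction has no arc leaving it in $D[R]$, hence none in $D$ by the closure property. If $|V(C)| \ge 2$ then $C$ is a knot of $D$ and we are done. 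If $V(C) = \{w\}$, then $w$ has no out-arc in $D$, so $w$ is a sink reachable from $v$, contradicting the hypothesis. Either way we obtain the desired conclusion.

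The only subtle point, and the one I would treat most carefully, is the closure property $N^+(u) \subseteq R$ for $u \in R$ together with its two consequences — that ``no out-arc within $D[R]$'' upgrades to ``no out-arc in $D$'', and that a sink of $D[R]$ is genuinely a sink of $D$. The remaining ingredients (the induction in the ``if'' direction, and the existence of a sink node in a finite non-empty DAG) are routine.
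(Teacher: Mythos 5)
Your proof is correct; note that the paper itself gives no proof of this proposition, citing it from Bessy et al.\ \cite{DBLP:conf/iwpec/BessyBCPS19}, so there is nothing internal to compare against, and your contrapositive/condensation argument is the standard one. Both directions are sound; the only point worth an explicit half-sentence is that the sink component $C$ of the condensation of $D[R]$ is in fact a strongly connected component of $D$ itself (as the knot definition requires), which follows immediately from your closure property: since no arc of $D$ leaves $C$, no strictly larger strongly connected set can contain it.
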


\begin{corollary}{\rm\cite{DBLP:conf/iwpec/BessyBCPS19}\label{cor1}}
    For any minimal solution $S\subseteq V(D)$ with the set of sink vertices $Z$ in $D-S$, we have $N^+(Z) = S$. 
\end{corollary}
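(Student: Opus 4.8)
The plan is to establish the two inclusions $N^+(Z) \subseteq S$ and $S \subseteq N^+(Z)$ separately, the first by direct inspection and the second by exploiting the minimality of $S$ together with Proposition~\ref{prop:1}.

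For $N^+(Z) \subseteq S$ I would argue as follows. Each $z \in Z$ is a sink of $D - S$, so every arc of $D$ leaving $z$ must have its head inside $S$ (otherwise $z$ would retain an out-neighbour in $D - S$ and would fail to be a sink there). Moreover no arc of $D$ joins two vertices of $Z$, again because that would contradict one of them being a sink of $D-S$; hence $N^+(Z) \cap Z = \emptyset$ and the whole of $N^+(Z)$ lands in $S$.

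For the reverse inclusion I would proceed by contradiction. Suppose some $s \in S \setminus N^+(Z)$ exists, and let $D' := D - (S \setminus \{s\})$, that is, the graph obtained from $D - S$ by reinserting $s$ and all arcs of $D$ between $s$ and $V(D) \setminus (S \setminus \{s\})$. I claim $D'$ is knot-free, which contradicts the minimality of $S$ since $S \setminus \{s\}$ would then be a strictly smaller solution. By Proposition~\ref{prop:1} it suffices to show that every vertex of $D'$ has a path to a sink of $D'$. The key observation is that each $z \in Z$ remains a sink in $D'$: we have $N^+(z) \subseteq N^+(Z) \subseteq S$, and since $s \notin N^+(Z)$ in fact $N^+(z) \subseteq S \setminus \{s\}$, all of which is deleted in $D'$, so $z$ has no out-arc in $D'$. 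Now given any vertex $v$ of $D'$: if $v \in V(D) \setminus S$, then in the knot-free graph $D - S$ the vertex $v$ has a path to some $z \in Z$ (Proposition~\ref{prop:1}), and this path survives in $D - S \subseteq D'$ and terminates at a sink of $D'$; if $v = s$, then either $s$ has an out-arc of $D$ into $V(D) \setminus S$, in which case $s$ reaches a sink of $D'$ via the previous case, or every out-arc of $s$ in $D$ goes to $S \setminus \{s\}$ (this subsumes the case where $s$ is already a sink of $D$), in which case $s$ has no out-arc in $D'$ and is itself a sink. Either way every vertex of $D'$ reaches a sink, so $D'$ is knot-free, the desired contradiction.

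I expect the only mildly delicate point to be the bookkeeping about which arcs of $D$ survive into $D'$ after reinserting $s$, and in particular verifying that this reinsertion cannot promote any $z \in Z$ to a non-sink --- which is precisely where the hypothesis $s \notin N^+(Z)$ is used --- together with the small case split on whether $s$ itself becomes a sink. The rest is a routine application of Proposition~\ref{prop:1}.
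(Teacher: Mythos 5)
Your argument is correct. Note that the paper itself gives no proof of this corollary: it is imported verbatim from Bessy et al.\ \cite{DBLP:conf/iwpec/BessyBCPS19}, so there is no in-paper proof to compare against. Your two inclusions are exactly what is needed: $N^+(Z)\subseteq S$ is immediate from $Z$ being sinks of $D-S$, and for $S\subseteq N^+(Z)$ your exchange argument — reinsert a vertex $s\in S\setminus N^+(Z)$, observe that every $z\in Z$ stays a sink precisely because $s\notin N^+(Z)$, and then route every surviving vertex (and $s$ itself, with the small case split on whether $s$ keeps an out-arc into $V(D)\setminus S$) to a sink via Proposition~\ref{prop:1} — correctly contradicts minimality. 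This is the same style of minimality-based exchange the paper uses in its proof of Proposition~\ref{claim:1}, so your proof fits naturally; the only cosmetic remark is that the observation $N^+(Z)\cap Z=\emptyset$ in your first paragraph is not needed, since every out-neighbour of a sink of $D-S$ already lies in $S$ and $Z\cap S=\emptyset$.
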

  
Proposition \ref{prop:1} and Corollary \ref{cor1} imply that given a digraph $D$, the problem of finding a set of sink vertices $Z$ such that every vertex in $V(D)-N^+(Z)$ has a directed path to a vertex in $Z$ and  $|N^+(Z)|$ is minimum; this is equivalent to the \textsc{Knot-Free Vertex Deletion} ({\KFProb}) problem. Therefore, our algorithm aims to find the set of sink vertices $Z$ corresponding to an optimal solution, while minimizing $|N^+(Z)|$ instead of directly finding the deletion set. \\

\noindent
{\bf Strategy of our Algorithm.}  The algorithm expands on the ideas used in \cite{ramanujan2022exact} where the algorithm branches on the possibility that a vertex $v\in V(D)$ is either a sink or a non-sink vertex in some optimal solution. In the branch where we conclude $v$ to be a non-sink vertex there are two possibilities, $v$ is either in the deletion set or not. To track this additional information that $v$ is non-sink, we use a potential function $\phi$ for $V(D)$ defined as follows.
  
\begin{definition}[Potential function]
    Given a digraph $D=(V,E)$, we define a \emph{potential function} on $V(D)$, $\phi:V(D)\rightarrow \{0.25,1\}$ such that $\phi(v)=1$, if $v$ is a potential vertex to become a sink in an optimal solution and $\phi(v)=0.25$, if $v$ is a non-sink vertex. For any subset $V'\subseteq V(D)$, $\phi(V')=\sum_{x\in V'} \phi(x)$. We call a vertex $v$ as an \emph{undecided} vertex if $\phi(v)=1$ and a \emph{semi-decided} vertex if $\phi(v)=0.25$. 
\end{definition}

\begin{definition}[Feasible solution]
    A set $S\subseteq V(D)$ is called a feasible solution for $(D,\phi)$ if $D-S$ is knot-free and for any sink vertex $s$ in $D-S$, $\phi(s)=1$. ${\KFProb}(D,\phi)$ denotes the size of an optimal solution for $(D,\phi)$. 
\end{definition}

\noindent To solve the \textsc{Knot-Free Vertex Deletion} problem on a digraph $D$, we initialize the potential values of all vertices to $1$. As soon as we decide a vertex to be a non-sink vertex, we drop its potential by $0.75$. Any vertex whose potential is $0.25$ cannot become a sink in the final knot-free graph corresponding to an optimal feasible vertex deletion set of $(D,\phi)$.

\noindent Now we are ready to define the \emph{out-reachability set} of $v$ which is the set of \emph{undecided} vertices which are reachable from $v$ even after deleting their out-neighbours, denoted as $R^+(v)$. Note that, $u \in R^+(v)$ if and only if $v\in R^-(u)$. We shall use $S_{min}$ to denote a minimal solution for the given instance $(D,\phi)$ and $Z_{min}$ for the set of sinks in $D-S_{min}$. Similarly, we use $S_{opt}$ and $ Z_{opt}$ to refer to an optimal solution and its sink set respectively. We further define an update function to make our algorithm description concise.\\
\vspace{-20pt}
\begin{algorithm}[ht!]
	\SetAlgoLined
	\SetKwData{I}{I}\SetKwData{size}{size}\SetKwData{clrr}{cLRR}\SetKwData{Stop}{Stop}\SetKwData{sizec}{size(cLRR)}
	\small 
	\vspace{4pt}
    \KwIn{A directed graph $D$, a potential function $\phi$, a sink vertex $s$ (optional) and a set  of non-sink vertices $NS$ (optional)}
    \KwOut{ An updated digraph $D'$ and an updated potential function $\phi'$}\vspace*{2mm}
    \hrule
    \label{helper1}\vspace*{2mm}
    \caption{$update(D,\phi,s,NS)$}
    $D' = D$, $\phi' = \phi$\\
    \If {\text{input} s \text{is provided}} {
        $D'= D' - R(s)$\\
        \For{v $\in R^+(s)$ }{
            $\phi'(v) = 0.25$}}
    \If {input NS is provided} { 
        \For{v $\in$ NS}{
            $\phi'(v) = 0.25$}}
    \Return{$D', \phi'$}\vspace*{2mm}
\end{algorithm}\\[-37pt]

\begin{rr}{\rm\cite{ramanujan2022exact}}\label{rr1}
    If all the vertices in $D$ are semi-decided and $D$ has no source or sink vertices, then $V(D)$ is contained inside any feasible solution for $(D,\phi)$. 
\end{rr}

\begin{rr}{\rm\cite{ramanujan2022exact}}\label{rr2} 
    Let $v\in D$ be such that $N^-(v)=\emptyset$, $S$ is an optimal solution for $D$ iff $S$ is an optimal solution for $D' = D-v$.
\end{rr}

\begin{rr}{\rm\cite{ramanujan2022exact}}\label{rr3} 
    Let $v\in D$ be such that $N^+(v)=\emptyset$, $S$ is an optimal solution for $D$ iff $S$ is an optimal solution for $D' = D-R(v)$.
\end{rr}
 
\begin{proposition}\label{claim:1} 
    If  $x \in Z_{min}$, then $N^+(x) \subseteq S_{min}$, $S_{min}\cap R^-(x)=\emptyset$ and $Z_{min} \cap R^+(x) = \emptyset$.
\end{proposition}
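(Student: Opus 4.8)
The plan is to prove the three claims in sequence, each feeding into the next. The first, $N^+(x)\subseteq S_{min}$, is immediate: since $x\in Z_{min}$ is a sink of $D-S_{min}$ it has no out-neighbour in $D-S_{min}$, so every vertex of $N^+(x)$ must lie in $S_{min}$; alternatively one may invoke Corollary~\ref{cor1}, which gives $N^+(x)\subseteq N^+(Z_{min})=S_{min}$.

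For $S_{min}\cap R^-(x)=\emptyset$ I would argue by the minimality of $S_{min}$. Put $S'=S_{min}\setminus R^-(x)$ and show that $S'$ is again a feasible solution for $(D,\phi)$; since $S'\subseteq S_{min}$, minimality then forces $S'=S_{min}$, which is exactly the claim. The one structural fact to isolate first is that $R^-(x)$ is closed under taking suffixes of witnessing paths: if $w$ lies on a directed path from some $v\in R^-(x)$ to $x$ inside $D-N^+(x)$, then the suffix of that path from $w$ to $x$ witnesses $w\in R^-(x)$. Two consequences: $R^-(x)\cap S'=\emptyset$, so every vertex of $R^-(x)$ survives in $D-S'$; and, combining Part~1 with the disjointness $N^+(x)\cap R^-(x)=\emptyset$, we get $N^+(x)\subseteq S'$, so $x$ is still a sink of $D-S'$. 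Now I would verify, via Proposition~\ref{prop:1}, that every vertex of $D-S'$ reaches a sink. A vertex $v\in R^-(x)$ reaches $x$ along a path which, by the closure property, lies entirely outside $S'$. A vertex $v\notin R^-(x)$ with $v\notin S'$ lies in $D-S_{min}$ (since $v\notin S'$ and $v\notin R^-(x)$ force $v\notin S_{min}$), hence reaches some sink $s$ of $D-S_{min}$ along a path that persists in $D-S'$; if $s$ is still a sink of $D-S'$ we are done, and otherwise $s$ has a surviving out-neighbour $t$, which --- since $N^+(s)\subseteq S_{min}$ by Part~1 --- must lie in $S_{min}\setminus S'\subseteq R^-(x)$, so $t$, and therefore $s$ and $v$, reaches the surviving sink $x$. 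Finally, the potential requirement holds because every sink of $D-S'$ is either $x$ itself or a sink of $D-S_{min}$, both of which have potential $1$.

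It remains to treat $Z_{min}\cap R^+(x)=\emptyset$; here I read $R^+(x)$ without the trivial element $x$ (one always has $x\in R^-(x)$, hence $x\in R^+(x)$ when $x$ is undecided), so the content is that no vertex of $R^+(x)$ other than $x$ lies in $Z_{min}$, which is what the proposition is used for later. Suppose $u\in Z_{min}$, $u\neq x$, and $x\in R^-(u)$, and fix a path from $x$ to $u$ in $D-N^+(u)$ with first arc $(x,y)$. By Part~1, $y\in N^+(x)\subseteq S_{min}$. If $y=u$ this contradicts $u\in Z_{min}$, since a sink of $D-S_{min}$ does not lie in $S_{min}$. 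If $y\neq u$, then $y$ lies on the path strictly before $u$, so the suffix property gives $y\in R^-(u)$, whence $y\in S_{min}\cap R^-(u)$, contradicting Part~2 applied to $u\in Z_{min}$.

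The step I expect to be the crux is the feasibility check for $S'$, and within it the case where un-deleting $R^-(x)$ destroys some sink $s$ of $D-S_{min}$; the point that makes it work is that any newly surviving out-neighbour of such an $s$ is forced into $R^-(x)$ and therefore inherits a path to the still-present sink $x$. The rest is a direct unpacking of the definitions of $R^-$, of sinks, and of feasibility.
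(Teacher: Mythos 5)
Your proof is correct and follows essentially the same route as the paper's: $N^+(x)\subseteq S_{min}$ from the definition of a sink, $S_{min}\cap R^-(x)=\emptyset$ by showing $S_{min}\setminus R^-(x)$ is still a (feasible) solution and invoking minimality, and the third claim by applying the second to the other sink via the first arc of the witnessing path. You are in fact slightly more careful than the paper, explicitly checking the potential condition on the sinks of $D-S'$ and explicitly excluding the trivial self-membership $x\in R^+(x)$, which the paper glosses over.
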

\begin{proof}
    By the definition of a sink vertex, if $x \in Z_{min}$ then $N^+(x)$ is in $S_{min}$. Let $Y= S_{min}\cap R^-(x)$. We claim $S'=S_{min}\setminus Y$ is also a solution which will contradict the fact that $S_{min}$ is a minimal solution. Suppose $S'$ is not a solution, then there exists a vertex $v$ in $D-S'$, that does not reach a sink in $D-S'$ but $v$ reaches some sink $s$ in $D-S_{min}$. Since every vertex in $Y$ reaches sink $x$ in $D-S'$, $v\notin Y$. If $s$ is not a sink vertex in $D-S'$, then some vertex $y\in Y$ is an out-neighbor of $s$. But then $v$ can reach the sink $x$ in $D-S'$ via $y$. Hence, $S'$ is a solution of size strictly smaller than $S_{min}$, which is a contradiction. Since $S_{min}\cap R^-(x)=\emptyset$, we also have $Z_{min}\cap R^-(x)=\emptyset$. Now, let $s\in R^+(x)$. If $s\in Z_{min}$ then by definition of $R^+(x)$, $x\in Z_{min}\cap R^-(s)$ which is a contradiction. \qed
\end{proof}

\begin{proposition}\label{prop:2}
    If  $x \in Z_{min}$, then $S_{min}=N^+(x)\cup S'$, where $S'$ is a minimal feasible solution for $(D',\phi') = update(D,\phi,x,-)$. Also, if $S_{min}'$ is a minimal solution for $(D',\phi')$ then $S=N^+(x)\cup S_{min}'$ is a solution for $(D,\phi)$.
\end{proposition}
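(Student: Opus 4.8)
The plan is to prove both assertions around the single set $S' := S_{min}\setminus N^+(x)$, which we will show equals the claimed minimal feasible solution for $(D',\phi') = update(D,\phi,x,-)$; recall this means $D'=D-R(x)=D-N^+(x)-R^-(x)$ and that $\phi'$ agrees with $\phi$ except on $R^+(x)$, where it is forced to $0.25$. By Proposition~\ref{claim:1}, $N^+(x)\subseteq S_{min}$ and $S_{min}\cap R^-(x)=\emptyset$, so $S'\subseteq V(D')$ and $N^+(x)\cup S'=S_{min}$; also $\phi(x)=1$, since $x$ is a sink of $D-S_{min}$ and $S_{min}$ is feasible for $(D,\phi)$.

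The one structural fact driving the argument is a \emph{closure property} of the in-reachability set: if $(w,u)$ is an arc with $u\in R^-(x)$ and $w\notin N^+(x)$, then $w\in R^-(x)$. This is immediate, because every vertex of $R^-(x)$ reaches $x$ along a path that stays inside $R^-(x)$ in $D-N^+(x)$, and prepending $w$ to such a path for $u$ keeps it inside $D-N^+(x)$. First I would use this to show $S'$ is a feasible solution for $(D',\phi')$. At the level of vertex sets, $D'-S'$ is exactly $D-S_{min}$ with $R^-(x)$ removed (using $N^+(x)\cup S'=S_{min}$ and $S_{min}\cap R^-(x)=\emptyset$). Any $v\in V(D'-S')$ lies outside $R^-(x)$ and, as $D-S_{min}$ is knot-free, reaches some sink $s$ there; since every vertex on a path in $D-S_{min}$ avoids $N^+(x)$, the closure property shows such a path can never enter $R^-(x)$, so it survives in $D'-S'$ and $s$ is still a sink there. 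Hence every vertex of $D'-S'$ reaches a sink, and $D'-S'$ is knot-free by Proposition~\ref{prop:1}. For the potential condition, if $s$ is a sink of $D'-S'$ then by the closure property $s$ has no out-neighbour in $R^-(x)$, so all of its $D-S_{min}$-out-neighbours survive in $D'-S'$; being a sink there, it was already a sink of $D-S_{min}$, i.e.\ $s\in Z_{min}$, so $\phi(s)=1$, and $s\notin R^+(x)$ by Proposition~\ref{claim:1}, whence $\phi'(s)=\phi(s)=1$.

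Next I would prove the second assertion, which also yields the minimality of $S'$. Let $S_{min}'$ be any minimal feasible solution for $(D',\phi')$ and put $S=N^+(x)\cup S_{min}'$; since $S_{min}'\subseteq V(D')$ it is disjoint from $R^-(x)$, and $V(D-S)$ consists of $V(D'-S_{min}')$ together with $R^-(x)$. For $v\in V(D'-S_{min}')$, a path to a sink $s'$ of $D'-S_{min}'$ survives in $D-S$, and $s'$ remains a sink there because any new out-neighbour would lie in $R^-(x)$ and force $s'\in R^-(x)$ by the closure property, contradicting $s'\in V(D')$; moreover $\phi(s')\ge\phi'(s')=1$ since $update$ only lowers potentials. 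For $v\in R^-(x)$, $v$ reaches $x$ along a path inside $R^-(x)\subseteq V(D-S)$, and $x$ is a sink of $D-S$ because $N^+(x)\subseteq S$, with $\phi(x)=1$. So every vertex of $D-S$ reaches a sink and all sinks have $\phi$-value $1$; by Proposition~\ref{prop:1}, $S$ is a feasible solution for $(D,\phi)$. Finally, if $S'$ were not minimal, shrink it to a minimal feasible $S''\subsetneq S'$ for $(D',\phi')$; then $N^+(x)\cup S''$ is a feasible solution for $(D,\phi)$ properly contained in $N^+(x)\cup S'=S_{min}$, contradicting the minimality of $S_{min}$. I expect the main obstacle to be keeping the knot-freeness bookkeeping honest in both directions — deleting $R^-(x)$ must not strand any vertex, and adding it back must not turn a sink into a non-sink — but both steps reduce cleanly to the closure property above, after which the potential tracking is routine.
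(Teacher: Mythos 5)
Your proof is correct and follows essentially the same route as the paper: decompose $S_{min}$ as $N^+(x)\cup S'$ with $S'=S_{min}\setminus N^+(x)$, show via Proposition~\ref{claim:1} that paths to sinks avoid $R^-(x)$ in one direction and that restored $R^-(x)$ vertices reach the new sink $x$ in the other, and obtain minimality by the exchange argument. The only difference is presentational: you isolate the closure property of $R^-(x)$ as an explicit lemma (the paper uses it implicitly) and you verify the potential/feasibility condition on sinks somewhat more carefully than the paper does.
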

\begin{proof}
    First, assume that $S'_{min}$ is a minimal solution for $(D',\phi')$. We claim that $S=S'_{min}\cup N^+(x)$ is a solution for $(D,\phi)$. Suppose not, then there exists a vertex $v$ that does not reach a sink in $D - S$. Note that $v\notin R^-(x)$ as all vertices in $R^-(x)$ can reach sink $x$ in $D - S$. Then $v\notin R(x)\cup S$ and hence it is also in $D'-S_{min}$, where it reaches a sink $s$. Since this path is disjoint from $S'_{min}\cup R^{-}(x)$, $v$ still can reach $s$ via the same path in $D - S$. Note that $s$ is not a sink in $D - S$ only if it has an out-neighbor in $R^-(x)\setminus N^+(x)$. But then $s$ and $v$ are in $R^-(x)$ which is a contradiction. Hence, $v$ can still reach the same sink $s$ and $S=S'_{min}\cup N^+(x)$ is a solution for $D$. 
    
    Now, we prove that $S'=S_{min}\setminus  N^+(x)$ is a solution for $(D',\phi')$. 
    Since, $S_{min}$ is optimal, $S'\cap R^+(x)=\emptyset$ via Claim \ref{claim:1} and thus, $S'$ is feasible. If $S'$ is not a solution, then there has to be a vertex $v$ in $D'-S'$ that does not reach any sink. Since $S_{min}\subseteq R(x)\cup S'$, $v\notin S_{min}$. But $S_{min}$ is a solution for $(D,\phi)$ and $v$ can reach some sink $s \in Z_{min}$ in $D-S_{min}$. 
    Note that $s\neq x$, since $v$ is not in $R^-(x)$ and cannot reach $x$ in $D-S_{opt}$. Then $v$ has a path to $s$ which is disjoint from $S_{opt}\supseteq N^+(x)$. Moreover this path is also disjoint from the set $R(x)\setminus N^+(x)$, since $v\notin R(x)$. 
    Hence this path is disjoint from $R(x)$ and $S_{opt}$. Therefore, $v$ can reach $s$ via the same path in $D'-S'$. If $s$ is a sink in $D-S_{opt}$, then $s$ is also a sink in $D'-S'$. Hence, $v$ has a path to a sink in $D'-S'$, which is a contradiction. It implies that  $S'=S_{min}\setminus  N^+(x)$ is a solution for $(D',\phi')$.
    
    Finally, assume $S'=S_{min}\setminus  N^+(x)$ is not a minimal solution for $(D',\phi')$. Then there exists $S''\subset S'$ which is also a solution for $(D',\phi')$, but then $S''\cup N^+(x)$ is also a solution for $(D,\phi)$ contradicting the minimality of $S_{min}$.\qed
\end{proof}

\begin{corollary}\label{cor2}
    If  $x \in Z_{opt}$, then $|S_{opt}|=  |N^+(x)|+{\KFProb}(D',\phi')$, where $(D',\phi') = update(D,\phi,x,-)$.
\end{corollary}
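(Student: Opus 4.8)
The plan is to read the equality off from Proposition~\ref{prop:2} and Proposition~\ref{claim:1}, proving the two inequalities $\KFProb(D',\phi')\le |S_{opt}|-|N^+(x)|$ and $|S_{opt}|\le |N^+(x)|+\KFProb(D',\phi')$ separately and then combining them. The first thing I would observe is that every optimal solution is inclusion-wise minimal: if some proper subset of an optimal solution were still a solution it would be strictly smaller, contradicting optimality. Hence both Proposition~\ref{claim:1} and Proposition~\ref{prop:2} may be invoked with $S_{min}$ taken to be an optimal solution.

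For the bound $\KFProb(D',\phi')\le |S_{opt}|-|N^+(x)|$, I would start from an optimal solution $S_{opt}$ for $(D,\phi)$ with $x\in Z_{opt}$. Proposition~\ref{claim:1} gives $N^+(x)\subseteq S_{opt}$, and Proposition~\ref{prop:2} (applied with $S_{min}=S_{opt}$) yields that $S':=S_{opt}\setminus N^+(x)$ is a feasible solution for $(D',\phi')$; since $N^+(x)\subseteq S_{opt}$, this set has size exactly $|S_{opt}|-|N^+(x)|$, which upper-bounds $\KFProb(D',\phi')$. For the reverse direction I would take an optimal, hence minimal, solution $S'_{opt}$ for $(D',\phi')$, so that $|S'_{opt}|=\KFProb(D',\phi')$. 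Because $D'=D-R(x)$ and $N^+(x)\subseteq R(x)$, no vertex of $N^+(x)$ survives in $D'$, so $N^+(x)$ and $S'_{opt}$ are disjoint; Proposition~\ref{prop:2} then shows $S:=N^+(x)\cup S'_{opt}$ is a solution for $(D,\phi)$. It is feasible, since the only sink it newly creates is $x$ and $\phi(x)=1$ (as $x\in Z_{opt}$ and optimal solutions are feasible), while every other sink of $D-S$ is already a sink of $D'-S'_{opt}$ and therefore has potential $1$ under $\phi'$, hence under $\phi$ since the $update$ procedure only lowers potentials, i.e. $\phi'(v)\le\phi(v)$ for all $v$. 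Thus $|S_{opt}|\le |S|=|N^+(x)|+\KFProb(D',\phi')$, and combining the two inequalities gives the claimed identity.

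I do not expect a genuine obstacle: the corollary is essentially a repackaging of Proposition~\ref{prop:2} as a statement about solution sizes rather than about solutions. The only point requiring care is the bookkeeping between ``minimal'' and ``optimal'' solutions together with the exactness of the cardinality count — specifically that $N^+(x)$ lies entirely inside $S_{opt}$ in one direction and is disjoint from the $D'$-part of the solution in the other — and both of these follow immediately from $N^+(x)\subseteq R(x)$ and $D'=D-R(x)$.
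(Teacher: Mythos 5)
Your proof is correct and follows essentially the same route as the paper: both derive the equality from Proposition~\ref{prop:2} (together with $N^+(x)\subseteq S_{opt}$ from Proposition~\ref{claim:1}), the paper phrasing the second inequality as a contradiction with the optimality of $S'_{opt}$ while you state the two inequalities directly. Your extra bookkeeping on disjointness and on feasibility of $N^+(x)\cup S'_{opt}$ is a slightly more careful spelling-out of what the paper delegates to the statement of Proposition~\ref{prop:2}, not a different argument.
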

\begin{proof}
    Let us assume $S_{opt}'$ to be an optimal solution to $(D',\phi')$. By Proposition \ref{prop:2}, $S = N^+(x)\cup S_{opt}'$ is a solution to $(D,\phi)$. Now, if $(D,\phi)$ has a minimal solution $\Tilde{S}$ smaller than $S$ then via Proposition \ref{prop:2}, $\Tilde{S}\setminus N^+(x)$ is a solution to $(D',\phi')$ that is smaller than $S_{opt}'$. This is a contradiction and hence the claim is true.\qed
\end{proof}

Now, we define a \emph{drop function} for every undecided vertex, denoted by $\psi(x)$ which takes the value $\psi (x) = \phi(R(x)) + 0.75*|R^+(x)\setminus R(x)|$. This function keeps track of the drop in potential when $x$ becomes a sink in some branch. Further, we define the \emph{surviving set} of vertex $x$, denoted by $\mathcal{S}_x = \{u\in N^+(x) | N^-(u)\cap \mathcal{U} = \{x\}\}$. Observe that for a given instance $(D,\phi)$, if $x\in \mathcal{U}$ has a nonempty survivor set, then the vertices of $\mathcal{S}_x$ is in a minimal solution $S_{min}$ if and only if $x$ is in $Z_{min}$ since Corollary \ref{cor1} requires $S_{min} = N^+(Z_{min})$ and $x$ is the only in-neighbour of elements of $\mathcal{S}_x$ which can be in $Z_{min}$. Finally, we have \emph{candidate sink set} of vertex $x$, denoted by $\mathcal{C}_x = \{u\in R^+(y)\ |\ y\in N^+(x)\}$. Observe that for any given instance $(D,\phi)$ if $x\notin Z_{min}$ then some out-neighbour $y$ of $x$ must satisfy $y\notin S_{min}$. This $y$ must reach some sink $s$ in the final solution, and by definition, $s$ must belong to $R^+(y)$. Thus $\mathcal{C}_x$ is the set of sinks the out neighbours of $x$ can possibly reach if $x$ is not a sink in the final solution. Also, note $\mathcal{C}_x \subseteq N^-(x)\cup R^+(x)$, since given $y\in N^+(x)$ and $s\in R^+(y)$, either $x\in N^+(s)$ or $x$ has a path to $s$ via $y$ in $D-N^+(s)$.

\section{An algorithm to compute minimum knot-free vertex deletion set}

In this section, we provide an exact exponential algorithm (Algorithm \ref{alg_1}) to compute the minimum knot-free vertex deletion set. We begin by initialising the potential of all vertices to $1$ and as soon as we decide a vertex to be non-sink we reduce its potential to $0.25$. In our algorithm, whenever we encounter a sink or a source, we remove them using Reduction Rules \ref{rr2} and \ref{rr3}. If all the vertices have potential $0.25$, then we apply Reduction Rule \ref{rr1} to solve the instance in polynomial time. 

At any point, if there exists an undecided vertex $x$ of potential $\psi(x)\geq 3.75$ then we branch on the possibility of it being a sink or non-sink in the optimal solution. Here we benefit from the high potential drop in the branch where $x$ becomes a sink which gives us a branching factor of $(3.75,0.75)$. Once such vertices are exhausted, we choose an undecided vertex $x$ with the \emph{maximum} number of undecided neighbours to branch on. We show that if $x$ is not a sink in the optimal solution then some other vertex $s$ from $\mathcal{C}_{x}$ has to be. Further, the bound on $\psi(x)$ helps us limit the cardinality of $\mathcal{C}_{x}$ and consequently the number of branches. In this case, we get a set of vertices from which atleast one has to be a sink in the final solution. Since, each branch has some vertex becoming a sink, the potential drop will be high enough to give a \emph{good} running time for our algorithm.
\begin{algorithm}[ht!]	\label{alg_1}
    \SetAlgoLined
    \SetKwData{I}{I}
    \SetKwData{size}{size}
    \SetKwData{clrr}{cLRR}
    \SetKwData{Stop}{Stop}
    \SetKwData{sizec}{size(cLRR)}
    \small 
    \vspace{6pt}
    \KwIn{A directed graph $D$ and a potential function $\phi$ }
    \KwOut{ The size of a minimum knot-free vertex deletion set}\vspace*{3mm}
    \hrule
    \label{algo1}\vspace*{3mm}
    \caption{\KFProb($D,\phi$)}
    
    \If {$\exists\ x$ such that $N^-(x)=\emptyset$} {
        \Return{$\KFProb(D-\{x\},\phi)$;}}\vspace*{2mm}
    \If {$\exists\ x$ such that $N^+(x)=\emptyset$} { 
        \Return{ $\KFProb(D-R(x),\phi)$;}}\vspace*{2mm}
    \If {$V(D) \subseteq\ \Bar{\mathcal{U}}$} {
        \Return{ $|V(D)|$;}}\vspace*{2mm}
    \If {$\exists\ x\in \mathcal{U}$ such that $\psi(x)\geq 3.75$}{
        $D_1, \phi_1 = update(D,\phi,x,-)$;\\
        $D_2, \phi_2 = update(D,\phi,-,x)$;\\
        \Return{ $\min\{\KFProb(D_1,\phi_1)+|N^+(x)|,\ \KFProb(D_2,\phi_2)\}$;}}\vspace*{2mm}

    pick $x\in \mathcal{U}$ maximizing $\lvert N(x)\cap\mathcal{U}\rvert$\vspace*{2mm}
            
    \If {$\mathcal{S}_x = \emptyset$}{  
        \For{$s_i \in\mathcal{C}_x$}{
            $D_i,\phi_i = update(D,\phi,s_i,-)$}
        \Return{ $\min\{\KFProb(D_x,\phi_x)+|N^+(x)|,\ \min_{s_i}\{\KFProb(D_i,\phi_i)+|N^+(s_i)|\}\}$;}}
        
    \Else {
        \If {$\lvert N(x)\cap\mathcal{U}\rvert \geq 1$} {
            $D_x,\phi_x = update(D,\phi,x,-)$;\\
            pick $y \in\mathcal{S}_x$\\
            set $B_x = R^+(y)$\\
            \For{$s_i \in  B_x$}{
                \If{$N^+(s_i)\subseteq N^+(s_j)$ for some $s_j\in B_x$}{
                    $B_x = B_x-s_j$}}
            \For{$s_i \in  B_x$}{
                $NS_i = \{s_j : j<i\}$\\
                $D_i,\phi_i = update(D,\phi,s_i,NS_i)$}  
            \Return{ $\min\{\KFProb(D_x,\phi_x)+|N^+(x)|,\ \min_{s_i}\{\KFProb(D_i,\phi_i)+|N^+(s_i)|\}\}$;}}
        \Else {
            $D_x,\phi_x = update(D,\phi,x,-)$;\\
            pick $y \in\mathcal{S}_x$, $s \in R^+(y)$\\
            $D_s,\phi_s = update(D,\phi,s,-)$\\
            \Return{ $\min\{\KFProb(D_x,\phi_x)+|N^+(x)|,\ \KFProb(D_s,\phi_s)+|N^+(s)|\}\}$;}}}\vspace*{6pt}
\end{algorithm}

\section{Correctness of the Algorithm}
In this section we prove that  Algorithm KFVD returns an optimal knot-free vertex deletion set for any input instance. Let \emph{Subroutine $0$} represent the lines $1-6$ of the algorithm. The correctness of lines $5-6$, $1-2$ and $3-4$ follow from Reduction Rules \ref{rr1}, \ref{rr2} and \ref{rr3} respectively. We use the terms \emph{Subroutine $1$}, \emph{Subroutine $2$}, \emph{Subroutine $3$} and \emph{Subroutine $4$} to refer to lines $7$-$10$, $12$-$15$, $17$-$27$ and $28$-$32$ of Algorithm \KFProb~respectively. The correctness of the subroutines are verified in the following sections.
    \subsection {Correctness of Subroutine 1} \label{sub1}
    In Subroutine $1$ we branch on an undecided vertex $x\in V(D)$, which has $\psi(x) \geq 3.75$. The following lemma proves the correctness of the subroutine. 

\begin{lemma}\label{lem:subroutine1}
    If $\ \exists\ x\in V(D)$ such that $x\in\mathcal{U}$ and $\psi(x)\geq 3.75$, then $\KFProb(D,\phi)=\min\{\KFProb(D_1,\phi_1)+|N^+(x)|,\text{ }\KFProb(D_2,\phi_2)\}$ where, $(D_1,\phi_1) = update(D,\phi,x,-)$ and $(D_2,\phi_2) = update(D,\phi,-,x)$.
\end{lemma}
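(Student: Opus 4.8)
The plan is to prove the two inequalities in $\KFProb(D,\phi) = \min\{\KFProb(D_1,\phi_1)+|N^+(x)|,\ \KFProb(D_2,\phi_2)\}$ separately, the ``$\le$'' direction by exhibiting, for each argument of the $\min$, a feasible solution of $(D,\phi)$ of that size, and the ``$\ge$'' direction by a case split on whether $x$ lies in the sink set of some fixed optimal feasible solution of $(D,\phi)$. I would first record the two trivial facts about $update$: the instance $(D_2,\phi_2)=update(D,\phi,-,x)$ has exactly the same underlying digraph as $D$ and differs only in that $\phi_2(x)=0.25$, while $(D_1,\phi_1)=update(D,\phi,x,-)$ deletes $R(x)$ and drops the potential of every vertex of $R^+(x)$, so that Proposition~\ref{prop:2} and Corollary~\ref{cor2} apply verbatim with this $x$. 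I would also note that Reduction Rule~\ref{rr3} has already fired in Subroutine~$0$, so $x$ has at least one out-neighbour and ``$x$ is a sink of $D-S$'' genuinely means $N^+(x)\subseteq S$; and that the hypothesis $\psi(x)\ge 3.75$ is not used for correctness at all — it only governs the branching vector in the later running-time analysis.

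For ``$\le$'': for the first argument, take an optimal (hence minimal) solution $S'$ of $(D_1,\phi_1)$; by the constructive half of Proposition~\ref{prop:2}, $N^+(x)\cup S'$ is a feasible solution of $(D,\phi)$, and since $S'\subseteq V(D_1)=V(D)\setminus R(x)$ is disjoint from $N^+(x)\subseteq R(x)$, cardinalities add and $\KFProb(D,\phi)\le |N^+(x)|+\KFProb(D_1,\phi_1)$. For the second argument, observe that every feasible solution $S$ of $(D_2,\phi_2)$ is a feasible solution of $(D,\phi)$: the digraph is unchanged, so $D-S$ is knot-free, and any sink $s$ of $D-S$ has $\phi_2(s)=1$, which forces $s\ne x$ and hence $\phi(s)=\phi_2(s)=1$. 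Thus $\KFProb(D,\phi)\le \KFProb(D_2,\phi_2)$, and combining the two bounds gives $\KFProb(D,\phi)\le\min\{|N^+(x)|+\KFProb(D_1,\phi_1),\ \KFProb(D_2,\phi_2)\}$.

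For ``$\ge$'': fix an optimal feasible solution $S_{opt}$ of $(D,\phi)$ with sink set $Z_{opt}$. If $x\in Z_{opt}$, then Corollary~\ref{cor2} gives $\KFProb(D,\phi)=|S_{opt}|=|N^+(x)|+\KFProb(D_1,\phi_1)$, which is at least the $\min$. If $x\notin Z_{opt}$, then $S_{opt}$ is still feasible for $(D,\phi_2)$: indeed $D-S_{opt}$ is knot-free, and the only feasibility constraint that lowering $\phi(x)$ to $0.25$ adds — namely that $x$ not be a sink of $D-S_{opt}$ — is satisfied precisely because $x\notin Z_{opt}$. Hence $\KFProb(D_2,\phi_2)\le |S_{opt}|=\KFProb(D,\phi)$, which is again at least the $\min$. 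In either case $\KFProb(D,\phi)\ge\min\{\dots\}$, and with the previous paragraph this yields the claimed equality.

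I do not anticipate a real obstacle: the one genuinely delicate ingredient — that contracting to $(D_1,\phi_1)$ when $x$ is assumed to be a sink neither loses nor creates solutions — is exactly what Proposition~\ref{prop:2} and Corollary~\ref{cor2} already establish, so this lemma mostly repackages them together with the essentially trivial remark that demoting $x$ from undecided to semi-decided tightens feasibility by exactly the single constraint ``$x$ is not a sink.'' The only points needing care are the disjointness of $N^+(x)$ and $S'$ (so that sizes add rather than merely subadd) and the fact $N^+(x)\ne\emptyset$ in this branch, both of which are handled by the prior application of Reduction Rule~\ref{rr3} in Subroutine~$0$.
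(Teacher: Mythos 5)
Your proof is correct and follows essentially the same route as the paper: both hinge on the case split over whether $x$ lies in $Z_{opt}$, invoke Proposition~\ref{prop:2}/Corollary~\ref{cor2} for the branch where $x$ is a sink, and handle the other branch by the observation that lowering $\phi(x)$ to $0.25$ only adds the constraint that $x$ not be a sink. The only cosmetic difference is that you organize the argument as two inequalities ($\le$ and $\ge$) while the paper phrases it as an induction on the potential of the instance with the same two cases; the underlying content is identical.
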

\begin{proof}
    We prove the lemma using an inductive argument on the potential of the instance $(\phi(D))$. Observe that if we consider the base case $\phi(D) = 1$, then there is only one undecided vertex in the input instance and the recurrence holds true. Now given an instance $D$, assume that the algorithm computes the correct solution for all smaller instances. Let the solutions for $\KFProb(D_1,\phi_1)$ and $\KFProb(D_2,\phi_2)$ be $S_1$ and $S_2$, respectively. Assuming $S_{opt}$ is an optimal solution for $\KFProb(D,\phi)$, we evaluate the two possibilities: 
    \begin{itemize}
        \item \textbf{Case 1:} $\boldsymbol{x\in Z_{opt}}$.
           We show that $S_1\cup N^+(x)$ is an optimal solution for  ${\KFProb(D,\phi)}$ if and only if  $S_1$ is an optimal solution for ${\KFProb(D_1,\phi_1)}$. The arguments are exactly the same as that in Corollary \ref{cor2}. We also claim that $\KFProb(D_2,\phi_2) \geq \KFProb(D_1,\phi_1)+|N^+(x)| $. For contradiction, suppose that is not the case, then any optimal solution $S_2$ for ($D_2,\phi_2$) is also a feasible solution for $\KFProb(D,\phi)$. But $S_2$ has size strictly smaller than $S_{opt}$, which contradicts our assumption that it is optimal. Hence, $\KFProb(D,\phi)=$ $\min\{\KFProb$ $(D_1,\phi)+|N^+(x)|,\text{ }\KFProb(D,\phi_2)\}$.\vspace*{2mm}
        \item \textbf{Case 2:} $\boldsymbol{x\notin Z_{opt}}$.
            In this case, $\KFProb (D_2,\phi_2)=\KFProb(D,\phi)$, by definition. Also $\KFProb(D_2,\phi_2) \leq \KFProb(D_1,\phi_1)+|N^+(x)| $, otherwise we have $S'=S_1\cup N^+(x)$ as a solution with size strictly smaller than $S_{opt}$ which contradicts our assumption that it is optimal. Hence, $\KFProb(D,\phi)=$ $\min\{\KFProb(D_1,\phi)+|N^+(x)|,\text{ } \KFProb(D,\phi_2)\}$.
    \end{itemize}\qed
\end{proof}

In Subroutine $1$, we get a branching vector $(3.75,0.75)$. After exhaustively running Subroutine $1$, every remaining undecided vertex satisfies $\psi(x)\leq 3.75$. An important implication of this bound is the restricted number of undecided vertices in $N(x)$ as well as $\mathcal{C}_x$, which we prove in the 
 following lemma.

\begin{lemma}\label{lem:neighbourhood}
    If $(D,\phi)$ is an instance on which Subroutine 1 is no longer applicable, then every $x\in\mathcal{U}$ satisfies, $|\mathcal{C}_x|\leq 2$.
\end{lemma}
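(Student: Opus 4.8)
The plan is to prove the contrapositive. Suppose some $x\in\mathcal{U}$ had $|\mathcal{C}_x|\ge 3$; I will exhibit an undecided vertex $w$ with $\psi(w)\ge 3.75$, contradicting the assumption that Subroutine~1 has been exhausted. Since $\phi$ takes values in $\{0.25,1\}$, every value of $\psi$ is a multiple of $0.25$, so ``Subroutine~1 inapplicable'' is the same as $\psi(v)\le 3.5$ for all $v\in\mathcal{U}$; this is the statement I want to contradict.

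The first step is a bookkeeping argument on $x$ itself. As noted when $\mathcal{C}_x$ was introduced, $\mathcal{C}_x\subseteq N^-(x)\cup R^+(x)$, and since an in-neighbour of $x$ either lies in $N^+(x)$ or reaches $x$ in $D-N^+(x)$, we have $N^-(x)\subseteq R(x)$; hence each element of $\mathcal{C}_x$ lies in $R(x)$ or in $R^+(x)\setminus R(x)$. Every element of $\mathcal{C}_x$ is undecided, so each one in $R(x)$ adds a full $1$ to $\phi(R(x))$, and each one outside $R(x)$ adds $0.75$ to the second summand of $\psi(x)$. Adding to this $x\in R^-(x)\subseteq R(x)$ (a further $1$, and $x\notin\mathcal{C}_x$) together with the fact that $N^+(x)\subseteq R(x)$ is nonempty (no sinks), one gets $\psi(x)\ge 1+p+0.75\,q+\varepsilon$, where $p=|\mathcal{C}_x\cap R(x)|$, $q=|\mathcal{C}_x|-p$, and $\varepsilon\ge 0$ collects the surplus weight coming from $N^+(x)\setminus\mathcal{C}_x$ and from $R^-(x)\setminus(\{x\}\cup\mathcal{C}_x)$. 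Running through the finitely many patterns with $p+q\ge 3$ shows $\psi(x)\ge 3.75$ in every case except a single ``pendant'' configuration: $p\le 1$, $N^+(x)=N^-(x)=\{y\}$ (so $R(x)=\{x,y\}$), and $\mathcal{C}_x\setminus\{y\}\subseteq R^+(x)\setminus R(x)$ with $|\mathcal{C}_x|=3$; there $\psi(x)$ only reaches $3.5$.

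The remaining pendant configuration is the crux, and here I would switch attention from $x$ to its unique out-neighbour $y$, observing that $\mathcal{C}_x=R^+(y)$, so $|R^+(y)|\ge 3$. When $y$ is undecided (the case $p=1$) I would show $\psi(y)\ge 3.75$ by the same bookkeeping: $y\in R^+(y)=\mathcal{C}_x$, the vertex $x\in N^+(y)\setminus\mathcal{C}_x$ contributes extra weight to $\phi(R(y))$, and for each of the remaining two witnesses $s\in\mathcal{C}_x\setminus\{y\}$ the first edge of a path from $y$ to $s$ inside $D-N^+(s)$ pins down an out-neighbour of $y$ that is counted either in $\phi(R(y))$ or, together with $s$, in $R^+(y)\setminus R(y)$. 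The genuinely delicate subcase is $p=0$, i.e.\ $y$ is semi-decided and $\psi(y)$ is not defined; here I would argue directly on $\{x,y\}\cup\mathcal{C}_x$ — using that $x$ lies in $R^-(s)$ for every $s\in\mathcal{C}_x$ (because $x\to y\to\cdots\to s$ survives the deletion of $N^+(s)$) to force one of these $s$, or another reachable undecided vertex, to carry potential drop at least $3.75$, or else to show that no such instance survives the preceding reduction rules. I expect this semi-decided pendant subcase to be the only technically involved point; the rest is the mechanical case analysis of the second paragraph.
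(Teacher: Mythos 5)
Your generic counting step is sound, and it is essentially the paper's own argument in different clothes (the paper organizes the same tally by $|R^+(x)\setminus R(x)|\in\{0,1,2\}$ rather than by your $(p,q)$); your pendant subcase with $y$ undecided can also be finished along the lines you sketch. The genuine gap is the subcase you explicitly defer: $N^+(x)=N^-(x)=\{y\}$ with $y$ semi-decided, $R(x)=\{x,y\}$, and $\mathcal{C}_x=R^+(y)$ consisting of three undecided vertices of $R^+(x)\setminus R(x)$. You propose to close it either by exhibiting an undecided vertex with drop at least $3.75$ or by showing that such an instance cannot survive the reduction rules. Neither route can succeed, so this is not a deferred technicality but a hole in the argument.

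Concretely, take vertices $x,y,a,s_1,s_2,s_3,t_1,t_2,t_3$ with arcs $x\leftrightarrow y$, $y\to a$, $a\to s_i$ and $s_i\leftrightarrow t_i$ for $i=1,2,3$, where $\phi(y)=\phi(a)=0.25$ and all other potentials equal $1$. There is no source or sink and not every vertex is semi-decided, so Reduction Rules 1--3 do not apply. One checks $R(x)=\{x,y\}$ and $R^+(x)=\{x,s_1,s_2,s_3\}$, hence $\psi(x)=1.25+3\cdot 0.75=3.5$; $R(s_i)=\{x,y,a,s_i,t_i\}$ and $R^+(s_i)=\{s_i\}$, hence $\psi(s_i)=3.5$; and $\psi(t_i)=2$. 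Thus Subroutine~1 is inapplicable, yet $\mathcal{C}_x=R^+(y)=\{s_1,s_2,s_3\}$ has size $3$. (The instance is even reachable by the algorithm: starting from all potentials equal to $1$ one has $\psi(y)\geq 3.75$ and, after the non-sink branch on $y$, $\psi(a)\geq 3.75$; the two consecutive non-sink branches of Subroutine~1 produce exactly this $\phi$.) So the subcase you postpone cannot be closed as you hope; it is precisely the configuration that the paper's own proof disposes of only by asserting that the in- and out-neighbour of $x$ contribute at least $0.5$ to $\psi(x)$ and that $|R(x)|=2$ is impossible because $d^+(x),d^-(x)\geq 1$, assertions that tacitly assume the in- and out-neighbour of $x$ are distinct vertices and fail on the digon $x\leftrightarrow y$ above. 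Your write-up is therefore incomplete at exactly the point where the statement is delicate, and completing it would require either excluding such digon configurations from the hypothesis or treating them separately.
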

\begin{proof}
    If Subroutine 1 is no longer applicable to $(D,\phi)$, then every undecided vertex has at most 2 undecided neighbours, since $N(x)\subseteq R(x)$ and $\phi(R(x))$ is counted towards $\psi(x)$.\\
    To begin with, consider the possibility that, $|R^+(x)\setminus R(x)| \geq 3$. Observe that while computing $\psi(x)$, vertices of $|R^+(x)\setminus R(x)|$ contribute a total of $2.25$, the potential of $x$ contributes $1$ and the in-neighbour and out-neighbour of $x$ has to contribute atleast $0.5$. This adds up to a total of $3.75$ which is not allowed since Subroutine 1 is no longer applicable. Hence $|R^+(x)\setminus R(x)| \leq 2$. Also, recall that $\mathcal{C}_x\subseteq N^-(x)\cup R^+(x)\subseteq R(x)\cup R^+(x)$. Now we can consider the following possibilities.
    \begin{itemize}
        \item \textbf{Case 1:} $\boldsymbol{|R^+(x)\setminus R(x)| = 0}$. Here $\mathcal{C}_x\subseteq R(x)$. Also $\phi(R(x))$ is less than $3.75$, out of which $x$ contributes 1. Hence $R(x)$ can have at most 2 more undecided vertices and consequently, $|\mathcal{C}_x| \leq 2$.\vspace{-5pt}\\
        \item \textbf{Case 2:} $\boldsymbol{|R^+(x)\setminus R(x)| = 1}$. In this case, since $R^+(x)\setminus R(x)$ contributes $0.75$, the contribution of $R(x)$ to  $\psi(x)$ has to be less than $3$. Since $x$ itself contributes 1, we can have at most one other undecided vertex in $R(x)$. Hence $|\mathcal{C}_x| \leq 2$.\vspace{-5pt}\\
        \item \textbf{Case 3:} $\boldsymbol{|R^+(x)\setminus R(x)| = 2}$. Here, vertices in $R^+(x)\setminus R(x)$ contribute a total of $1.5$ to $\psi(x)$ which gives $\phi(R(x)) < 2.25$. Hence if $|\mathcal{U}\cap R(x)|\geq 2$ then we get $|R(x)| = 2$ which is not possible since $d^+(x), d^-(x)\geq 1$. Hence $\mathcal{U}\cap R(x)=\{x\}$ and $|\mathcal{C}_x| \leq 2$.
    \end{itemize}\qed
\end{proof}
    \subsection{Correctness of Subroutine 2} \label{sec4:subsec2}
    In Subroutine 2, we deal with undecided vertices which has an empty surviving set corresponding to them. The fact that every out-neighbour of such a vertex has some other undecided in-neighbour helps us establish a high potential for elements of $\mathcal{C}_x$. 

\begin{lemma}\label{lem:subroutine2}
    Let $(D,\phi)$ be an instance such that $\psi(v) < 3.75$, for every vertex in $\mathcal{U}$. let $x$ be a vertex maximizing $\mathcal{U}\cap N(x)$ and $\mathcal{S}_x = \emptyset$. We claim, $$\KFProb(D,\phi) = min\{\KFProb(D_x,\phi_x)+|N^+(x)|,\ min_{s_i}\{\KFProb(D_i,\phi_i)+|N^+(s_i)|\}\}$$ where $(D_x,\phi_x) = update(D,\phi,x,-)$ and $(D_i,\phi_i) = update(D,\phi,s_i,-)$ for every $s_i\in\mathcal{C}_x$.
\end{lemma}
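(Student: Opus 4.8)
The plan is the usual two-sided bound for a branching rule: show that the right-hand side is simultaneously an upper and a lower bound for $\KFProb(D,\phi)$. Throughout I would fix a \emph{minimum-size} feasible solution $S_{opt}$ of $(D,\phi)$; being minimum it is also inclusion-wise minimal, so it plays the role of an $S_{min}$ and I may freely invoke Proposition \ref{claim:1}, Proposition \ref{prop:2} and Corollary \ref{cor2}. I would also record at the outset that, because Subroutine $0$ (lines $1$--$6$, i.e.\ Reduction Rules \ref{rr2} and \ref{rr3}) is not triggered, $x$ is neither a source nor a sink, and $x\in\mathcal{U}$.

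For the upper bound I would argue that every term inside the outer minimum is the size of a genuine feasible solution of $(D,\phi)$, hence is $\ge\KFProb(D,\phi)$. The first term: the second half of Proposition \ref{prop:2}, applied to $x$, says $N^+(x)\cup S'$ is a feasible solution of $(D,\phi)$ for any minimal (in particular optimal) solution $S'$ of $(D_x,\phi_x)$; since $N^+(x)\subseteq R(x)$ is deleted in $D_x$, this union has size exactly $|N^+(x)|+\KFProb(D_x,\phi_x)$. For a term $|N^+(s_i)|+\KFProb(D_i,\phi_i)$ I would run the same argument with $s_i$ in place of $x$, which is legitimate because $s_i$, being a member of some $R^+(\cdot)$, is undecided. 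Taking the minimum preserves the inequality.

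For the lower bound I would show that $|S_{opt}|$ actually equals one of those terms, splitting on whether $x\in Z_{opt}$. If $x\in Z_{opt}$, Corollary \ref{cor2} immediately gives $|S_{opt}|=|N^+(x)|+\KFProb(D_x,\phi_x)$. If $x\notin Z_{opt}$, I would first claim that some $y\in N^+(x)$ is \emph{not} in $S_{opt}$: otherwise $x$ becomes a sink in $D-(S_{opt}\setminus\{x\})$, every vertex still reaches a sink there (old sinks persist, or acquire $x$ as an out-neighbour and then reach the sink $x$), and feasibility survives because $\phi(x)=1$ — contradicting minimality of $S_{opt}$. Fixing such a $y$, knot-freeness of $D-S_{opt}$ forces $y$ to reach some sink $s\in Z_{opt}$; then $N^+(s)\subseteq S_{opt}$, so the $y$-to-$s$ path avoids $N^+(s)$, witnessing $y\in R^-(s)$, i.e.\ $s\in R^+(y)$, and since $y\in N^+(x)$ this puts $s\in\mathcal{C}_x$, say $s=s_i$. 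Now Corollary \ref{cor2} applied to $s_i\in Z_{opt}$ gives $|S_{opt}|=|N^+(s_i)|+\KFProb(D_i,\phi_i)$, an inner term. Either way $|S_{opt}|$ is one of the terms, so $\KFProb(D,\phi)\ge$ the right-hand side; combined with the upper bound this yields the claimed identity.

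I expect the only genuinely delicate step to be the one using minimality — ``$x\notin Z_{opt}$ implies some out-neighbour of $x$ survives $S_{opt}$'' — where one must check carefully that deleting $x$ from $S_{opt}$ preserves both the reach-a-sink property and feasibility; everything else is bookkeeping with Proposition \ref{prop:2} and Corollary \ref{cor2}. I would also remark that the hypotheses ``$\mathcal{S}_x=\emptyset$'' and ``$x$ maximises $|N(x)\cap\mathcal{U}|$'', together with the bound $|\mathcal{C}_x|\le 2$ from Lemma \ref{lem:neighbourhood}, play no role in this correctness statement; they are needed only to control the branching vector in the running-time analysis.
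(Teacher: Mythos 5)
Your proposal is correct and follows essentially the same route as the paper: the heart of both arguments is that if $x\notin Z_{opt}$ then some out-neighbour $y$ of $x$ survives and the sink it reaches lies in $R^+(y)\subseteq\mathcal{C}_x$, after which Proposition \ref{prop:2} and Corollary \ref{cor2} yield the recurrence (the paper phrases this via induction on the potential, you via an explicit two-sided bound, which is only a presentational difference). You additionally fill in details the paper leaves implicit — why a surviving out-neighbour exists, why the reached sink is undecided and hence in $\mathcal{C}_x$, and that $\mathcal{S}_x=\emptyset$ is irrelevant to correctness — all of which are accurate.
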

\begin{proof}
    In any given minimal solution, either $x$ is a sink or at least one of its out-neighbours must survive. The out-neighbour which survives, say $y$, must be able to reach a sink in the final solution. This sink by definition, belongs to $R^+(y)$. Hence, if $x$ is not a sink in the final solution, then at least one vertex of $\mathcal{C}_x$ has to be. Thus, every possible minimal solution contains at least one element from the set $\{x\}\cup\mathcal{C}_x$ in its sink set. By induction, assuming that $\KFProb(D',\phi')$ returns the optimal solution for every instance smaller than $(D,\phi)$ along with Corollary \ref{cor2}, proves the lemma.\qed
\end{proof}

\noindent Observe that, since $|\mathcal{C}_x|\leq 2$, we have at most 3 branches when running Subroutine 2. Further we claim that whenever we branch on the case where $s_i\in\mathcal{C}_x$ becomes a sink, the potential drop is at least $3$. We have, either $x\in N^+(s_i)$ or $x\in R^-(s_i)$. Further, $s_i\in R^+(y)$ for some $y\in N^+(x)$ and since $\mathcal{S}_x = \emptyset$, $y$ has some undecided in-neighbour $z$ different from $x$. Note that, $z\neq s_i$ since $s_i\in R^+(y)$. Similar to $x$, $z\in N^+(s_i)$ or $z\in R^-(s_i)$. Hence $\psi(s_i)\geq\phi(\{s_i,x,z\})=3$. Now let us look at the possible branches which could arise and their corresponding worst case branching factor. Note that if $|\mathcal{C}_x| = 0$, no branching is involved and the subroutine is executed as a reduction rule. 
\begin{itemize}
    \item \textbf{Case 1:} $\boldsymbol{\mathcal{C}_x = \{s\}}$. Here, we have the following possibilities:\\
    \begin{itemize}
        \item $|N(x)\cap\mathcal{C}_x| = 0$, in which case, $N[x]$ has potential at least $1.5$, since $x$ has atleast two neighbours
        and $\phi(x) = 1$. Further, $s\in\mathcal{C}_x$ contributes $0.75$, giving $\psi(x)\geq 2.25$.
        \begin{figure}
            \centering
            \includegraphics[height=1.25in,width=2.25in]{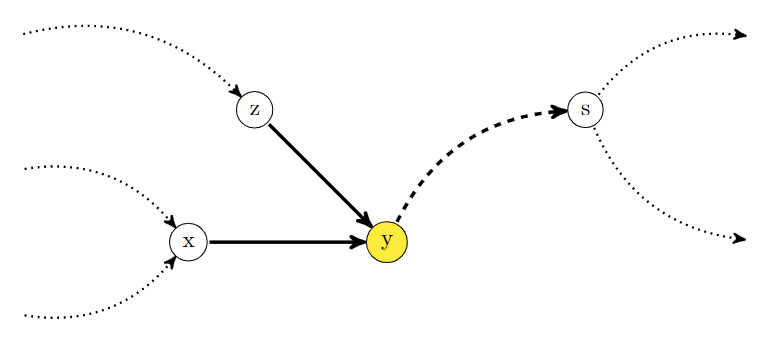}
            \caption{Case 1.1: $N[x]\cap\mathcal{C}_{x}=\emptyset$}
            \label{S2_C1_1}
        \end{figure}\footnote{White vertices are undecided, yellow ones are semi-decided. Dotted lines indicate possibility of in and out-neighbours, dashed lines denote directed paths and thick lines denote edges.}
        \item $|N(x)\cap\mathcal{C}_x| = 1$, in which case, $N[x]$ has potential at least $2.25$, since $\phi(x)=\phi(s)=1$, and $x$ has at least one more neighbour, giving $\psi(x)\geq 2.25$.\\
        \begin{figure}
            \centering
            \includegraphics[height=1.3in,width=2.25in]{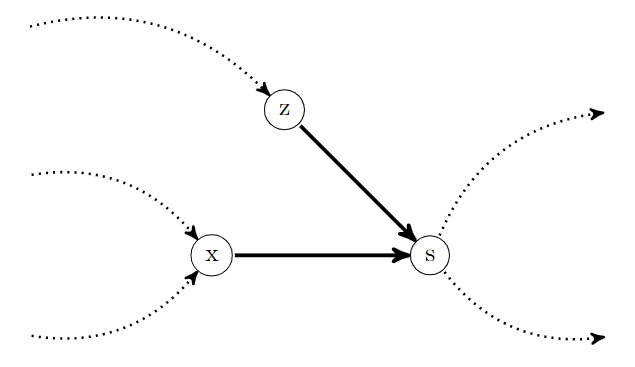}
            \caption{Case 1.2: $s\in N[x]\cap\mathcal{C}_{x}$}
            \label{S2_C1_2}
        \end{figure}
    \end{itemize}
    \newpage
    \item \textbf{Case 2:} $\boldsymbol{\mathcal{C}_x = \{s_1, s_2\}}$. Here the possibilities are as follows.\\
    \begin{itemize}
        \item $|N(x)\cap\mathcal{C}_x| = 0$, in which case, $N[x]$ has potential at least $1.5$, since $x$ has at least two neighbours and $\phi(x) = 1$. Further $s_1, s_2\in\mathcal{C}_x$ contribute $1.5$, giving $\psi(x)\geq 3$.
        \begin{figure}
            \centering 
            \includegraphics[height=1.20in,width=1.85in]{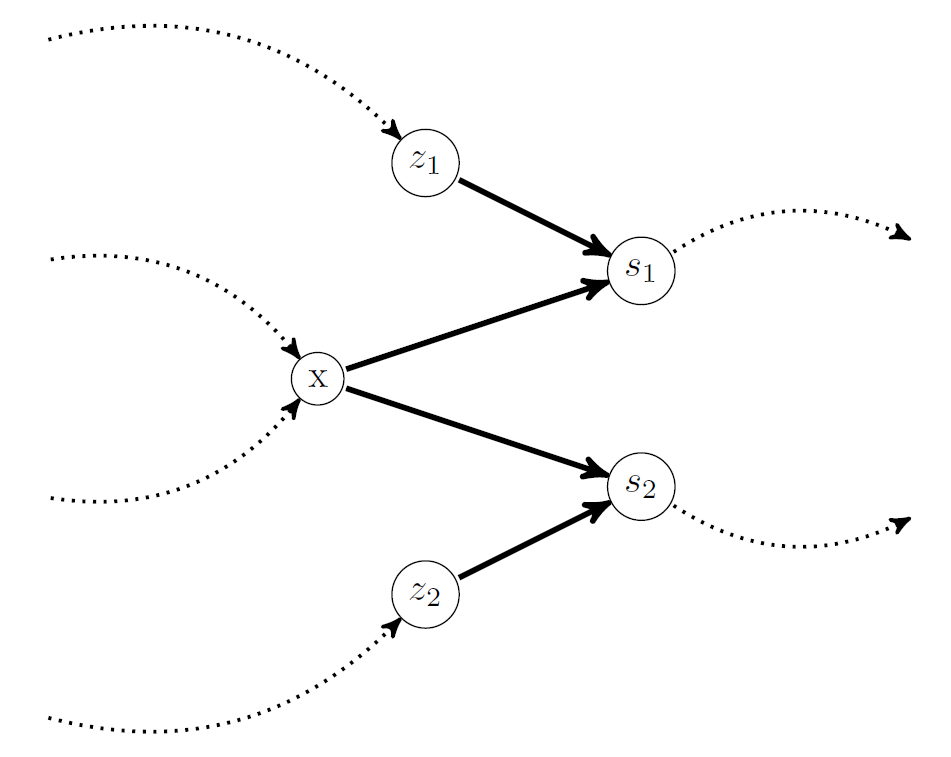}
            \caption{Case 2.1: $s_1,s_2\in N[x]\cap\mathcal{C}_{x}$}
            \label{S2_C2_3}
        \end{figure}
        \item $N(x)\cap\mathcal{C}_x = \{s_1\}$, in which case, $N[x]$ has potential at least $2.25$, since $x$ has at least two neighbours, including $s_1$. Further $s_2\in\mathcal{C}_x\setminus N(x)$ contributes $0.75$, giving $\psi(x)\geq 3$.
        \begin{figure}
            \centering
            \includegraphics[height=1.20in,width=1.85in]{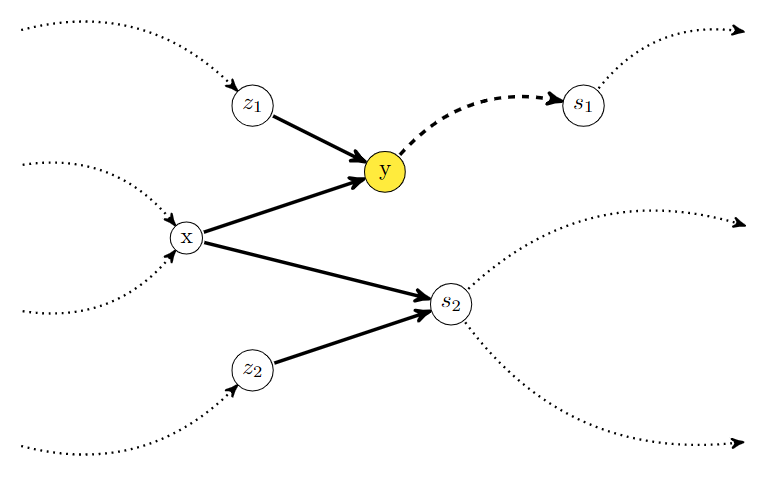}
            \caption{Case 2.2: $s_1\in N[x]\cap\mathcal{C}_{x}$}
            \label{S2_C2_4}
        \end{figure}
        \item $|N(x)\cap\mathcal{C}_x| = 2$, in which case, $N[x]$ has potential at least $3$, due to $\phi(x)=\phi(s_1)=\phi(s_2)=1$ and hence $\psi(x)\geq 3$.
        \begin{figure}
            \centering
            \includegraphics[height=1.20in,width=2.25in]{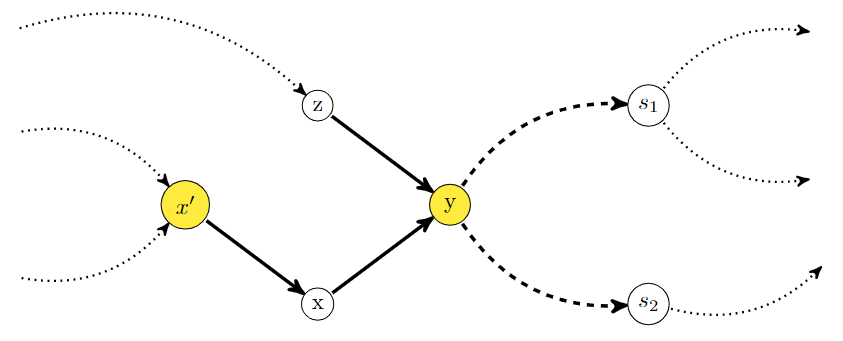}
            \caption{Case 2.3: $N[x]\cap\mathcal{C}_{x}=\emptyset$}
            \label{S2_C2_5}\vspace{-20pt}
        \end{figure}
    \end{itemize}
\end{itemize}
\noindent Hence the worst case branching vectors from Subroutine 2 are $(3,2.25)$ and $(3,3,3)$. 
    \subsection{Correctness of Subroutine 3} \label{sec4:subsec3}
    In Subroutine $3$ we branch on undecided vertices with a non-empty surviving set and at least one undecided neighbour. Let us define $D_x = \{u\ |\ \exists\ v\in R^+(y),\ v\neq u,\ N^+(v)\subseteq N^+(u)\}$ and $B_x$ to be $R^+(y)\setminus D_x$. We also fix an arbitrary ordering of vertices of $B_x$ and define $NS_i = \{s_{j}\in B_x\ |\ j<i\}$\\

\begin{lemma}\label{lem:subroutine3}
    Let $(D,\phi)$ be an instance such that $\psi(v) < 3.75$, for every vertex in $\mathcal{U}$, let $x$ be a vertex maximizing $\mathcal{U}\cap N(x)$ and $y\in\mathcal{S}_x$. We claim $$\KFProb(D,\phi) = min\{\KFProb(D_x,\phi_x)+|N^+(x)|,\ min_{s_i}\{\KFProb(D_i,\phi_i)+|N^+(s_i)|\}\}$$ where $(D_x,\phi_x) = update(D,\phi,x,-)$ and $(D_i,\phi_i) = update(D,\phi,s_i,NS_i)$ for all $s_i\in B_x$.
\end{lemma}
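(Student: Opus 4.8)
The plan is to mimic the inductive scheme of Lemmas~\ref{lem:subroutine1} and~\ref{lem:subroutine2}: I induct on the potential $\phi(D)$, assume \KFProb\ returns the correct value on every strictly smaller instance, and establish the displayed recurrence for $(D,\phi)$ (the base case $\phi(D)=1$ is trivial, since $x$ is then the only undecided vertex). The branches split on whether $x$ is a sink of the solution under consideration. To see they are exhaustive, observe that $N^-(y)\cap\mathcal{U}=\{x\}$ and every sink of $D-S$ for a feasible $S$ has potential $1$, so Corollary~\ref{cor1} ($S_{min}=N^+(Z_{min})$) gives, for any minimal solution $S_{min}$ with sink set $Z_{min}$, that $y\in S_{min}$ if and only if $x\in Z_{min}$. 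Hence if $x\notin Z_{min}$ then $y$ survives in $D-S_{min}$, and by Proposition~\ref{prop:1} it reaches a sink $t$ there; the path from $y$ to $t$ avoids $S_{min}\supseteq N^+(t)$, so $t\in R^+(y)\cap Z_{min}$. Combined with the elementary property of the pruning loop that for every $s\in R^+(y)$ the final set $B_x$ retains some $s^{*}$ with $N^+(s^{*})\subseteq N^+(s)$, this shows every minimal solution either puts $x$ in its sink set or has a sink whose out-neighbourhood contains that of some vertex of $B_x$.

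For the inequality $\KFProb(D,\phi)\le\min\{\cdots\}$ I would lift a minimum solution of each child instance: since $(D_i,\phi_i)=update(D,\phi,s_i,NS_i)$ has the same underlying graph as $update(D,\phi,s_i,-)$ and $\phi_i$ is pointwise below it, a solution of $(D_i,\phi_i)$ is a solution of $update(D,\phi,s_i,-)$; by Proposition~\ref{prop:2}, adjoining $N^+(s_i)$ gives a solution of $(D,\phi)$, so $\KFProb(D,\phi)\le|N^+(s_i)|+\KFProb(D_i,\phi_i)$, and the $x$-branch is handled identically via Corollary~\ref{cor2}.

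For the reverse inequality, fix an optimal solution, pass to an inclusion-minimal subset of it (still optimal), and call it $S_{opt}$ with sink set $Z_{opt}$. If $x\in Z_{opt}$, Corollary~\ref{cor2} is precisely the claim. Otherwise $Z_{opt}$ meets $R^+(y)$, say in $s$; take $s^{*}\in B_x$ with $N^+(s^{*})\subseteq N^+(s)$. The key step is that $s^{*}$ is forced to be a sink of $D-S_{opt}$: since $N^+(s^{*})\subseteq N^+(s)\subseteq S_{opt}$, the vertex $s^{*}$ has no out-neighbour in $D-S_{opt}$, so it is either a sink or deleted, and if it were deleted then $S_{opt}\setminus\{s^{*}\}$ would still be feasible (with $s^{*}$ now a sink of $\phi$-value $1$), contradicting optimality. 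Hence $Z_{opt}\cap B_x\neq\emptyset$; let $s_i$ be its earliest element in the fixed ordering, so that $s_1,\dots,s_{i-1}\notin Z_{opt}$. By Proposition~\ref{prop:2} with $s_i$ in place of $x$, $S_{opt}\setminus N^+(s_i)$ is a minimal solution of $update(D,\phi,s_i,-)$, and Proposition~\ref{claim:1} (the only sink in $R^-(s_i)$ is $s_i$, and no sink lies in $R^+(s_i)$) lets me identify the sink set of $D_i-(S_{opt}\setminus N^+(s_i))$ as exactly $Z_{opt}\setminus\{s_i\}$, which is disjoint from $NS_i=\{s_1,\dots,s_{i-1}\}$. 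Therefore $S_{opt}\setminus N^+(s_i)$ is also feasible for $(D_i,\phi_i)$, so $\KFProb(D_i,\phi_i)\le|S_{opt}|-|N^+(s_i)|$, and combining with the forward inequality forces equality and shows the $i$-th branch attains $|S_{opt}|=\KFProb(D,\phi)$.

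The step I expect to be most delicate is this last one: producing a single optimal solution for which \emph{both} restrictions of the $i$-th branch --- deleting $R(s_i)$ and marking $s_1,\dots,s_{i-1}$ as non-sinks --- are simultaneously harmless, which is why the sink set of the reduced instance must be pinned down exactly (through careful use of $R^{-}(s_i)$, $R^{+}(s_i)$ and Proposition~\ref{claim:1}) rather than merely asserting that the reduced instance has \emph{some} optimal solution of the right size. The running-time content --- that under $\psi(v)<3.75$ for all $v\in\mathcal{U}$, $\mathcal{S}_x\neq\emptyset$, $|N(x)\cap\mathcal{U}|\ge1$, and $x$ maximising $|N(x)\cap\mathcal{U}|$, each branch drops the potential by a definite amount, with $|B_x|$ bounded via Lemma~\ref{lem:neighbourhood} --- is independent of correctness and I would defer it to the complexity section.
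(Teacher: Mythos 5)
Your proposal is correct and follows essentially the same route as the paper's proof: branch on $x$ being a sink versus the first vertex of the ordered set $B_x$ that becomes a sink, using that $y\in\mathcal{S}_x$ survives whenever $x$ is not a sink (via $S_{min}=N^+(Z_{min})$ and $N^-(y)\cap\mathcal{U}=\{x\}$), that the sink reached by $y$ lies in $R^+(y)$, and that out-neighbourhood domination lets one prune to $B_x$ and mark earlier candidates as non-sinks. You merely spell out details the paper leaves implicit (why a dominated vertex is forced to be a sink, and why $S_{opt}\setminus N^+(s_i)$ stays feasible under $\phi_i$), so no further comparison is needed.
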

\begin{proof}
    We claim that $y\in\mathcal{S}_x$ survives in every minimal solution where $x$ is not a sink. We have $y\in S_{min}$ if and only if some in-neighbour of $y$ is a sink in $D-S_{min}$, but $y\in\mathcal{S}_x$ implies  $N^-(y)\cap\mathcal{U} = \{x\}$. Hence, if $x\notin Z_{min}$, then $y\notin S_{min}$ and we can assume that if $x$ is not a sink then $y$ survives and must reach some sink. This sink by definition must be in $R^+(y)$. But if some distinct $u,v$ in $R^+(y)$ satisfy $N^+(v)\subseteq N^+(u)$, then in every minimal solution where $u$ is a sink, $v$ is also a sink and $\KFProb(update(D,\phi,v,-))\leq\KFProb(update(D,\phi,u,-))$. Thus, every possible minimal solution contains at least one element from the set $\{x\}\cup (R^+(y)\setminus D_x) = \{x\}\cup B_x$ in its sink set. Further, we can fix an arbitrary ordering for elements of $B_x$ and branch on the first vertex of this ordering which becomes a sink. In the branch where the $i^{th}$ vertex becomes a sink, we can assume that all the vertices before it are non-sinks. Hence by induction, assuming that $\KFProb(D',\phi')$ returns the optimal solution for every instance smaller than $(D,\phi)$, proves the claim.\qed
\end{proof}

\noindent Since $\psi(x)<3.75$, $N(x)$ has at most two undecided vertices whenever $x$ is undecided. We will analyze the branching vector for this subroutine in two steps. For undecided vertices with two undecided neighbours and for undecided vertices with one undecided neighbour. If $R^+(y)$ is empty, then $x$ must be a sink and no branching is involved. Also, Lemma \ref{lem:neighbourhood} implies $R^+(y)$ can have cardinality at most 2, which leads to the following possibilities. 

\subsection*{$\boldsymbol{B_x = \{s\}}$}\label{1_branch}
    Observe that if $x$ is a sink, then since $x$ has at least three vertices in its closed neighbourhood, at least two of which are undecided, we have $\psi(x)\geq 2.25$. But, if $x$ is not a sink and $s$ is the sink $y$ reaches, then since $s$ has at least two neighbours, and $x\in N^+(s)$ or $x\in R^-(s)$, we have $\psi(s)\geq\phi(N[s]\cup  \{x\})\geq 2.25$. Refer Figure \ref{S3_C1} for an illustration of this case. Hence the worst case branching vector is $(2.25,2.25)$.\\
    \begin{figure}
        \centering
        \includegraphics[height=1.25in,width=1.25in]{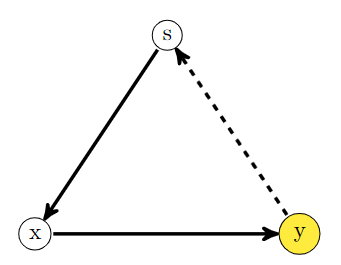}
        \caption{Case 1: $B_x = \{s\}$}
        \label{S3_C1}
    \end{figure}
    
\subsection*{$\boldsymbol{B_x = \{s_1,s_2\}}$}\label{2_branch}
    In this case, depending on the number of undecided neighbours $x$ has, we have the following cases.\\[-15pt]
    \paragraph*{\textbf{Case 1:} Branching on a vertex with two undecided neighbours}\quad\\[3pt]
    Here $|N(x)\cap\mathcal{U}| = 2$ implies $\phi(N[x]) \geq 3$ and hence $B_x\setminus N(x) = \emptyset$. Otherwise, the vertex in $B_x\setminus N(x)$ would contribute $0.75$ to $\psi(x)$ making it at least $3.75$ which contradicts our assumption that Subroutine 1 is no longer applicable. Now, assuming $y\in\mathcal{S}_x$ is the chosen vertex, we have the following possibilities.
    
    \begin{itemize}
        \item \textbf{Subcase 1.1:} $\boldsymbol{s_2\in R^-(s_1)}$ or $\boldsymbol{s_1\in R^-(s_2)}$\\[3pt]
        The arguments for both cases are identical, hence assume that  $\boldsymbol{s_2\in R^-(s_1)}$.
        \begin{itemize}
            \item[$\bullet$] If $x$ is a sink, then since $x$ has at least three undecided vertices in its closed neighbourhood each contributing 1 giving $\psi(x)\geq 3$.
            \item[$\bullet$] If $x$ is not a sink and $s_1$ is a sink which  reaches in the final solution. Then we have $x\in N(s_1)$, $s_2\in R^-(s_1)$ and hence $\psi(s_1)\geq 3$
            \item[$\bullet$] If $x$ is not a sink, $s_1$ is not a sink in the final solution but $s_2$ is. Then we have $x\in N(s_2)$, $y\in R^-(s_2)$ and $s_1\in R^+(s_2)$. Now if $s_1\in R^-(s_2)$ then $x,s_1$ and $s_2$ each contribute 1 to $\psi(s_2)$. Else, we can assume $y\neq s_1$ and $x,y,s_2$ contributes a total of 2.25 and $s_1$ contributes 0.75 giving $\psi(s_2)\geq 3$. 
        \end{itemize}
        \begin{figure}
            \centering 
            \includegraphics[height=1.25in,width=1.25in]{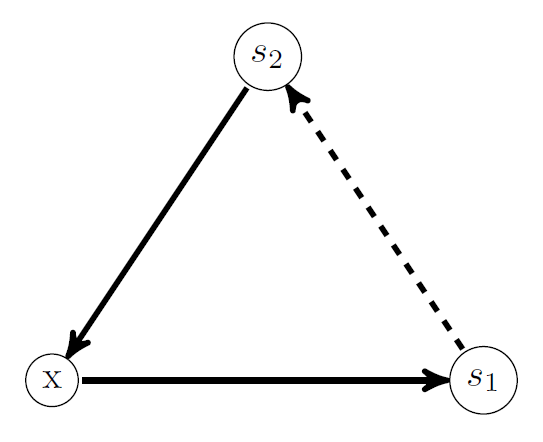}
            \caption{Case 2.1.1: $s_2\in R^-(s_1)$}
            \label{S3_C2_1_1}
        \end{figure}
        \noindent Hence we have a worst case branching vector $(3,3,3)$. Further, in the rest of the sub-cases, we can assume $s_1\notin R^-(s_2)$ and vice-versa which implies $y\notin \{s_1, s_2\}$. Now consider the case where $s_1\in N^+(x),\ s_2\in N^-(x)$, since $x\notin N^+(s_1)$, either $s_2\in N^+(s_1)$ or $s_2\in R^-(s_1)$. But $s_2\notin R^-(s_1)$ is and $s_2\in N^+(s_1)$ implies $s_1\in R^-(s_2)$, which cannot be the case. Thus this possibility is already considered, which leaves us with the following configurations.\\
        
        \item \textbf{Subcase 1.2:} $\boldsymbol{s_1, s_2\in N^-(x)}$\\[-5pt]
        \begin{itemize}
            \item[$\bullet$] If $x$ is a sink, since $N[x]$ has at least four vertices out of which three are undecided, we have $\psi(x)\geq 3.25$.
            \item[$\bullet$] If $x$ is not a sink and $s_1$ is a sink which $y$ reaches in the final solution. Then we have $x\in N^+(s_1)$, $y\in R^-(s_1)$. Also, since $N^+(s_1)\not\subseteq N^+(s_2)$ and $x\in N^+(s_2)$, $N^+(s_1)$ has at least one more vertex other than $x$, giving $\psi(s_1)\geq 2.5$.
            \item[$\bullet$] If $x$ is not a sink, $s_1$ is not a sink in the final solution, but $s_2$ is. Then we have $x\in N^+(s_2)$, $y\in R^-(s_2)$. Also, since $N^+(s_2)\not\subseteq N^+(s_1)$ and $x\in N^+(s_1)$, $N^+(s_2)$ has at least one more vertex, giving $\phi(R(s_2))\geq 2.5$. Further, since we also have the added information that $s_1$ is not a sink which gives a drop of 0.75, the total drop in potential is at least 3.25. Refer Figure \ref{S3_C2_1_2} for an illustration of this case.
        \end{itemize}
        \begin{figure}
            \centering
            \includegraphics[height=1.25in,width=1.5in]{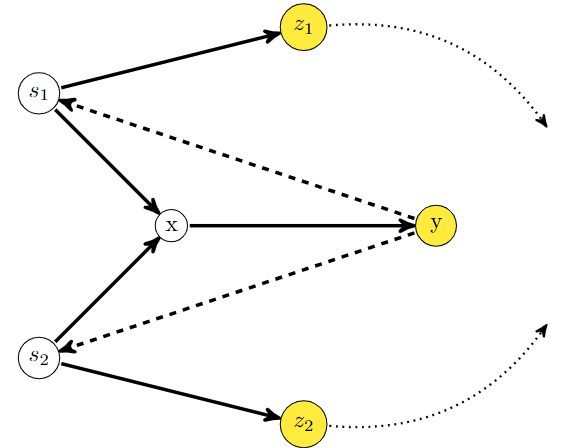}
            \caption{Case 2.1.2: $s_1, s_2\in N^-(x)$}
            \label{S3_C2_1_2}
        \end{figure}
        \noindent Here we have a worst case branching vector $(2.5,3.25,3.25)$\\
        
        \item \textbf{Subcase 1.3:} $\boldsymbol{s_1, s_2\in N^+(x)}$\\[-5pt]
        \begin{itemize}
            \item[$\bullet$] If $x$ is a sink, since $N[x]$ has at least four vertices out of which three are undecided, we have $\psi(x)\geq 3.25$.
            \item[$\bullet$] If $x$ is not a sink and $s_1$ is a sink which $y$ reaches in the final solution. Then we have $x\in N^-(s_1)$, $y\in R^-(s_1)$ and at least one other vertex in $N^+(s_1)$ giving $\psi(s_1)\geq 2.5$.
            \item[$\bullet$] If $x$ is not a sink, $s_1$ is not a sink in the final solution, but $s_2$ is. Then we have, $x\in N^-(s_2)$, $y\in R^-(s_2)$ and at least one other vertex in $N^+(s_2)$ giving $\psi(s_2)\geq 2.5$. Further, we also have the added information that $s_1$ is not a sink which gives a drop of 0.75. Hence, the total drop in potential is at least 3.25. 
        \end{itemize}
        \begin{figure}
            \centering
            \includegraphics[height=1.2in,width=2in]{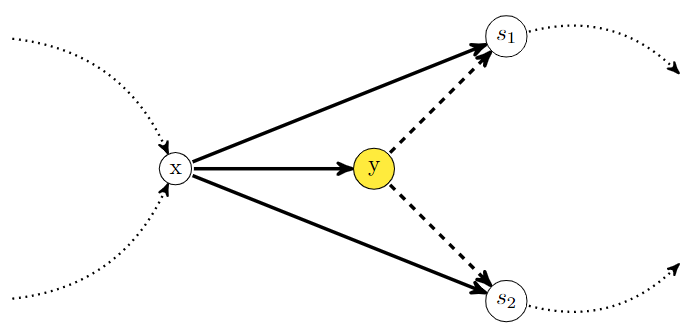}
            \caption{Case 2.1.3: $s_1,s_2\in N^+(x)$}
            \label{S3_C2_1_3}
        \end{figure}
        \noindent Here we have a worst case branching vector $(3.25,2.5,3.25)$.\\
    \end{itemize}
    \paragraph*{\textbf{Case 2:} Branching on a vertex with one undecided neighbour}\quad\\[3pt]
    Here, $|N(x)\cap\mathcal{U}| = 1$ implies $\phi(N[x]) \geq 2.25$ and hence, $|B_x\setminus N(x)| \leq 1$. Otherwise, the vertices in $B_x\setminus N(x)$ would contribute $0.75$ to $\psi(x)$ making it at least $3.75$ which contradicts our assumption that Subroutine 1 is no longer applicable. Now, assuming $s_1\in N(x)$ and $y\in\mathcal{S}_x$ is the chosen vertex, we have the following possibilities.
    
    \begin{itemize}
        \item \textbf{Subcase 2.1:} $\boldsymbol{s_1\in R^-(s_2)}$ or $\boldsymbol{s_2\in R^-(s_1)}$\\[3pt]
        The arguments for both the case are identical, hence we assume $s_1\in R^-(s_2)$. \\[-8pt]
        \begin{itemize}
            \item[$\bullet$] If $x$ is a sink, then since $x$ has at least three vertices in its closed neighbourhood exactly two of which are undecided and $s_2\in R^+(x)$, we get $\psi(x)\geq 3$.
            \item[$\bullet$] If $x$ is not a sink and $s_1$ is a sink which $y$ reaches in the final solution. Then we have at least three vertices in $N[s_1]$ out of which only $x,\ s_1$ are undecided. This and $s_2\in R^+(s_1)$ give $\psi(s_1)\geq 3$.
            \item[$\bullet$] If $x$ is not a sink, $s_1$ is not a sink in the final solution but $s_2$ is. Then we have $x\in R^-(s_2)$ and $s_1\in R^-(s_2)$. Now, $s_2$ has an out-neighbour which cannot be $x$ since $x$ has only one undecided neighbour and $s_1$ since $s_2\in R^+(s_1)$. This out-neighbour contributes at least $0.25$ making the total drop at least $3.25$. Refer Figure \ref{S3_C2_2_1_a} and \ref{S3_C2_2_1_b} for examples.\\
        \end{itemize}
        Hence the worst case branching vector is $(3,3,3.25)$.\\
        \begin{figure}
            \centering
            \begin{minipage}{.5\textwidth}
                \centering
                \includegraphics[height=1in,width=2in]{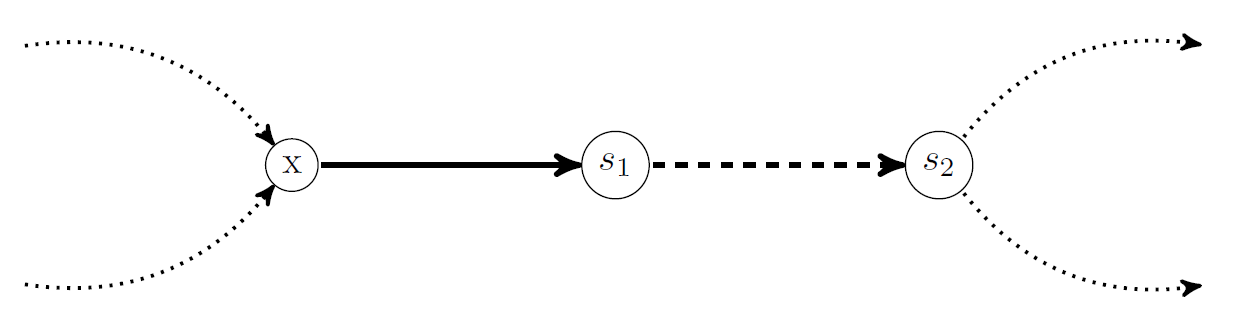}
                \caption{Case 2.2.1: $s_1\in R^-(s_2)$}
                \label{S3_C2_2_1_a}
            \end{minipage}%
            \begin{minipage}{.5\textwidth}
                \centering
                \includegraphics[height=1.25in,width=2in]{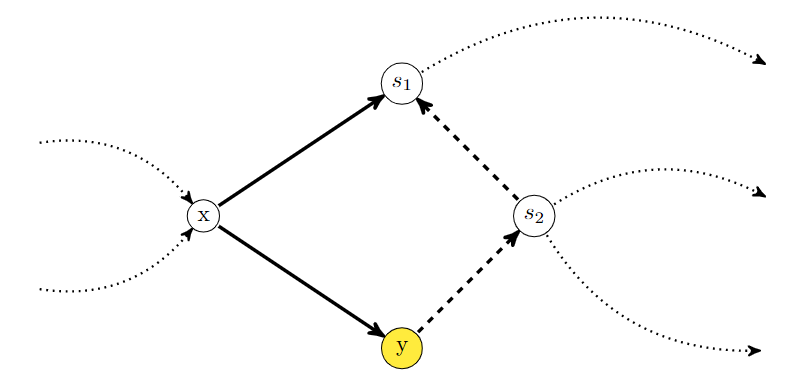}
                \caption{Case 2.2.1: $s_2\in R^-(s_1)$}
                \label{S3_C2_2_1_b}
            \end{minipage}
        \end{figure}
        
        \item \textbf{Subcase 2.2:} $\boldsymbol{s_2\notin R^-(s_1)\ \&\ s_1\notin R^-(s_2)}$\\[5pt]
        Here, $s_1\notin N^-(x)$, since in that case $s_1\in R^-(s_2)$ or $s_1\in N^+(s_2)$ which implies $s_2\in R^-(s_1)$. Also, $y\neq s_1$ since otherwise $s_1\in R^-(s_2)$.\\[-8pt]
        \begin{itemize}
            \item[$\bullet$] If $x$ is a sink, then since $x$ has at least two out-neighbours $y,s_1$ and some in-neighbour $z$, along with $s_2\in R^+(x)$ we have $\psi(x)\geq 3.25$. 
            \item[$\bullet$] If $x$ is not a sink and $s_1$ is a sink which $y$ reaches. Then we have $x,y$ in $R^-(s_1)$ and since $s_1$ has some other out-neighbour, we get $\psi(s_1)\geq 2.5$.
            \item[$\bullet$] If $x$ is not a sink, $s_1$ is not a sink in the final solution but $s_2$ is. Then we have, $x,y,z$ in $R^-(s_2)$ which gives $\psi(s_2)\geq 2.5$. Further, since we also have the added information that $s_1$ is not a sink which gives a drop of 0.75, the total drop in potential is at least 3.25. 
        \end{itemize}
        \begin{figure}
            \centering
            \includegraphics[height=1.25in,width=2in]{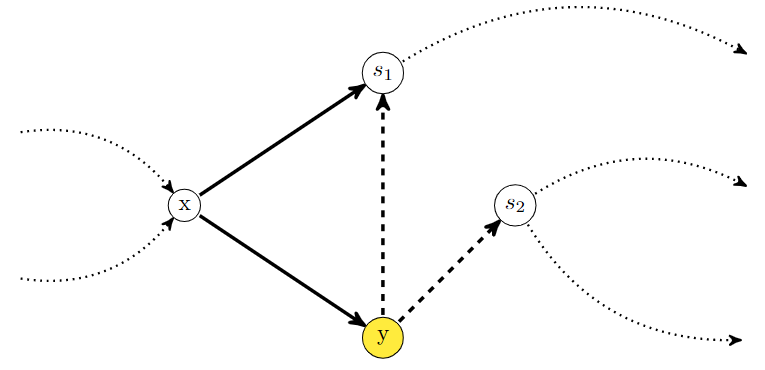}
            \caption{Case 2.2.2: $s_1,s_2\in N^+(x)$}
            \label{S3_C2_2_2}
        \end{figure}
        \noindent Hence the worst case branching vector is $(3.25,2.5,3.25)$.
    \end{itemize}
    \subsection{Correctness of Subroutine 4} \label{sec4:subsec4}
    In Subroutine $4$, we branch on undecided vertices with a non-empty surviving set and zero undecided neighbour. It is executed only when Subroutines $1,2$ and $3$ is no longer applicable. Hence, we can assume all the vertices no longer satisfy the requirements of Subroutine $1,2$ and $3$. In particular, note that every undecided vertex has zero undecided neighbours.

\begin{lemma}\label{lem:subroutine4}
    Let $(D,\phi)$ be an instance such that $\psi(v) < 3.75$ and $\mathcal{U}\cap N(v) = \emptyset$ for every vertex in $\mathcal{U}$. Let $x$ be an undecided vertex, $y\in\mathcal{S}_x$ and $s\in R^+(y)$. We claim $$\KFProb(D,\phi) = min\{\KFProb(D_x,\phi_x)+|N^+(x)|,\ \KFProb(D_s,\phi_s)+|N^+(s)|\}$$ where $(D_x,\phi_x) = update(D,\phi,x,-)$ and $(D_s,\phi_s) = update(D,\phi,s,-)$.
\end{lemma}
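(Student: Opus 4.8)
The plan is to mimic the proofs of Lemmas~\ref{lem:subroutine2} and~\ref{lem:subroutine3}: I would argue by induction on the potential $\phi(D)$ of the instance, the inductive hypothesis being that $\KFProb$ returns the optimum on every instance of strictly smaller potential, with the base case covered by Subroutine~$0$. Since declaring a vertex $v$ to be a sink via $update(D,\phi,v,-)$ drops the total potential by exactly $\psi(v)\geq 1$, both instances $(D_x,\phi_x)=update(D,\phi,x,-)$ and $(D_s,\phi_s)=update(D,\phi,s,-)$ have smaller potential, so the hypothesis applies to them. The inequality $\KFProb(D,\phi)\leq\min\{\KFProb(D_x,\phi_x)+|N^+(x)|,\ \KFProb(D_s,\phi_s)+|N^+(s)|\}$ is then immediate: by Proposition~\ref{prop:2}, solving $(D_x,\phi_x)$ optimally and re-adding $N^+(x)$ (respectively, solving $(D_s,\phi_s)$ optimally and re-adding $N^+(s)$) produces a solution of $(D,\phi)$.

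For the reverse inequality I would fix a minimal optimal solution $S_{opt}$ with sink set $Z_{opt}$. If $x\in Z_{opt}$, Corollary~\ref{cor2} gives $|S_{opt}|=|N^+(x)|+\KFProb(D_x,\phi_x)$ and the first term of the minimum is attained. Otherwise $x\notin Z_{opt}$; since $\mathcal{S}_x\neq\emptyset$, the observation recorded after the definition of the surviving set (which invokes Corollary~\ref{cor1}) gives $\mathcal{S}_x\cap S_{opt}=\emptyset$, and in particular $y\notin S_{opt}$. Hence $y$ survives in the knot-free graph $D-S_{opt}$ and, by Proposition~\ref{prop:1}, reaches some sink $s^{*}\in Z_{opt}$; as $N^+(s^{*})\subseteq S_{opt}$, the witnessing path avoids $N^+(s^{*})$, so $s^{*}\in R^+(y)$. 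The remaining task is to show that the chosen $s\in R^+(y)$ can be taken equal to $s^{*}$, i.e.\ that the sink set of every minimal solution meets $\{x,s\}$.

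This final step is exactly where the Subroutine~$4$ hypotheses $N(v)\cap\mathcal{U}=\emptyset$ and $\psi(v)<3.75$ for all $v\in\mathcal{U}$ must be exploited, and it is the step I expect to be the main obstacle, since Subroutine~$4$ branches on a single $s\in R^+(y)$ whereas Subroutine~$3$ was allowed to branch over all undominated elements of $R^+(y)$. The skeleton carries over from Lemma~\ref{lem:subroutine3}: an element $u\in R^+(y)$ dominated by another $v\in R^+(y)$ (i.e.\ $N^+(v)\subseteq N^+(u)$) may be discarded, because $v$ is a sink whenever $u$ is and $\KFProb(update(D,\phi,v,-))\leq\KFProb(update(D,\phi,u,-))$. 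The new ingredient I would need is a potential-accounting argument analogous to Lemma~\ref{lem:neighbourhood}: for any $s'\in R^+(y)$ one has $x\notin N^+(s')$ (by independence of $\mathcal{U}$) and $y\notin N^+(s')$ (by definition of $R^+(y)$), so $x$ reaches $s'$ in $D-N^+(s')$ and therefore $s'\in R^+(x)$; thus every such $s'$ pays at least $0.75$ toward $\psi(x)$ in addition to $\phi(x)=1$ and the contribution of $x$'s (nonempty) in- and out-neighbourhoods, and two distinct undominated members of $R^+(y)$ additionally have incomparable out-neighbourhoods, which pins extra potential to their own drop functions. Reconciling this with $\psi(x)<3.75$, with $\psi(s')<3.75$ for each candidate $s'$, and with $N(x)\cap\mathcal{U}=\emptyset$ should force the undominated part of $R^+(y)$ down to the single vertex $s$; doing this accounting carefully is the heart of the argument. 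Granting it, the lemma closes by the induction hypothesis applied to $(D_x,\phi_x)$ and $(D_s,\phi_s)$ together with Corollary~\ref{cor2}.
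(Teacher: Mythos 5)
Your setup, the easy inequality, and the case $x\in Z_{opt}$ all match the paper, and you correctly isolate the crux: when $x\notin Z_{opt}$, one must show that the \emph{particular} vertex $s\in R^+(y)$ chosen by the algorithm may be assumed to be a sink. However, the route you propose for this crux has a genuine gap, on two counts. First, the potential accounting you sketch cannot force the undominated part of $R^+(y)$ down to a single vertex: two candidates $s_1,s_2\in R^+(y)$ contribute $0.75$ each to $\psi(x)$ (or $1$ if they happen to lie in $R^-(x)$), and together with $\phi(x)=1$ and the semi-decided in- and out-neighbours of $x$ this need only give $\psi(x)\geq 3$, comfortably below the $3.75$ threshold; nothing in the Subroutine~4 hypotheses visibly excludes $|R^+(y)|=2$ with incomparable out-neighbourhoods, and Lemma~\ref{lem:neighbourhood} explicitly allows $|\mathcal{C}_x|=2$ in this regime. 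You yourself flag this accounting as the unfinished ``heart of the argument,'' and as sketched it falls short. Second, even granting the singleton claim, it would not validate the recurrence as stated: the lemma and the algorithm branch on an \emph{arbitrary} $s\in R^+(y)$, possibly a dominated one, and the domination argument borrowed from Subroutine~3 only yields $\KFProb(update(D,\phi,v,-))\leq\KFProb(update(D,\phi,u,-))$ in favour of the dominating vertex $v$; it does not show that branching on the dominated vertex still attains the optimum.

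The paper closes the crux by a different argument, which is the idea your proposal is missing. Since $s$ is undecided and, by the Subroutine~4 hypothesis, has no undecided in-neighbours, $s$ can never belong to a minimal solution (as $S_{min}=N^+(Z_{min})$ and every sink is undecided); so if $s$ were not a sink in a minimal solution where $x$ is not a sink and $y$ survives, $s$ would have to reach some sink $z$. Tracing the paths $P_{(y,s)}$ and $P_{(s,z)}$, and splitting on whether $N^+(z)$ meets $P_{(y,s)}$, one collects the three undecided vertices $x$, $s$, $z$ together with three distinct semi-decided vertices ($y$ plus suitable vertices $w_1,w_2$ taken from these paths and from $N^+(s)$ or $N^+(z)$) inside $R(s)$ or $R(z)$, forcing $\psi(s)\geq 3.75$ or $\psi(z)\geq 3.75$ and contradicting the hypothesis. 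Hence every $s\in R^+(y)$, in particular the arbitrarily chosen one, is a sink in any minimal solution in which $x$ is not, and the recurrence follows by induction together with Corollary~\ref{cor2}. Without an argument of this kind (or a completed version of your accounting that also resolves the dominated-$s$ issue), your proof does not go through.
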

\begin{proof}
    We claim that $y\in\mathcal{S}_x$ survives in every minimal solution where $x$ is not a sink. We have, $y\in S_{min}$ if and only if some in-neighbour of $y$ is a sink in $D-S_{min}$, but $y\in\mathcal{S}_x$ implies  $N^-(y)\cap\mathcal{U} = \{x\}$. Hence, if $x\notin Z_{min}$, then $y\notin S_{min}$. Now, assuming $x$ is not a sink, $y$ survives, and it must reach some sink, which by definition must belong to $R^+(y)$. Now assume that $s\in R^+(y)$ is not a sink in the final solution where $x$ is not a sink and $y$ survives. Since $s$ is not a sink, it must either belong to the solution set or reach a sink vertex. But $s$ cannot belong to any minimal feasible solution to $(D,\phi)$, since $N^-(s)\cap\mathcal{U} = \emptyset$. Hence we can assume $s$ reaches some sink $z$. Now there are two possibilities  either $N^+(z)$ intersects a vertex of $P_{(y,s)}$ or not.
    See Figure \ref{S4_C1} and Figure \ref{S4_C2} in appendix for an illustration of the following cases:    
    \begin{itemize}
        \item \textbf{Case 1:} $\boldsymbol{w_1\in N^+(z)\cap P_{(y,s)}}$. Here, $z$ has a path to $s$ in $D-N^+(s)$ via $P_{(w_1,s)}$. Also, $x,z\notin N(s)$ since $s$ is undecided and cannot have undecided neighbours and hence $x\in R^-(s)$ since it has a path to $s$ via $y$. Now, $w_1\neq y$, since $z$ is a sink that $s$ can reach in a solution where $y$ survives and hence $y,w_1\in R^-(s)$. Thus, $R^-(s)$ contains $x,y,w_1,z$ and $s$. Now, out-neighbour of $s$ in $P_{(s,z)}$ cannot be $y$ or $w_1$ since $N^+(s)$ does not intersect $P_{(y,s)}$ and $y,w_1\in P_{(y,s)}$. Thus $N^+(s)$ contains at least one semidecided vertex $w_2$ different from $y,w_1$ which implies that $x,y,w_1,z,w_2$ and $s$ belongs to $R(s)$. This leads to a contradiction since this implies $\phi(R(s))\geq 3.75$.\\
        \begin{figure}
            \centering
            \includegraphics[height=1.3in,width=3.25in]{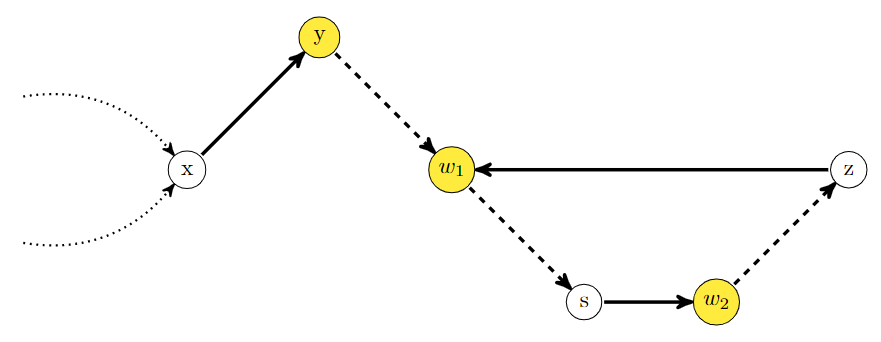}
            \caption{Case 1: $w_1\in N^+(z)\cap P_{(y,s)}$}
            \label{S4_C1}
        \end{figure}
        
        \item \textbf{Case 2:} $\boldsymbol{N^+(z)\cap P_{(y,s)}=\emptyset}$. Here, $y$ can reach $z$ in $D-N^+(z)$ via the path $P_{(y,s)}$ followed by $P_{(s,z)}$. Observe that $x\notin N^+(z)$ as both are undecided, and $y\in R^-(z)$ which implies that $x\in R^-(z)$. Now, out-neighbour of $s$ in $P_{(s,z)}$ cannot be $y$ since $N^+(s)$ does not intersect $P_{(y,s)}$ and $y\in P_{(y,s)}$. Thus $P_{(s,z)}$ contains at least one semidecided vertex $w_1$ different from $y$ and $R^-(z)$ contains $x,y,w_1,z$ and $s$. Now, out-neighbour of $z$ cannot be $y$, since $z$ is a sink that $s$ can reach in a solution where $y$ survives and it cannot be $w_1$ since $N^+(z)$ does not intersect $P_{(s,z)}$ and $w_1\in P_{(s,z)}$. Thus $N^+(z)$ contains at least one semidecided vertex $w_2$ different from $y,w_1$ which implies that $x,y,w_1,z,w_2$ and $s$ belong to $R(z)$. This leads to a contradiction since, $\phi(R(z))\geq 3.75$.\\
        \begin{figure}
            \centering
            \includegraphics[height=1.3in,width=3.25in]{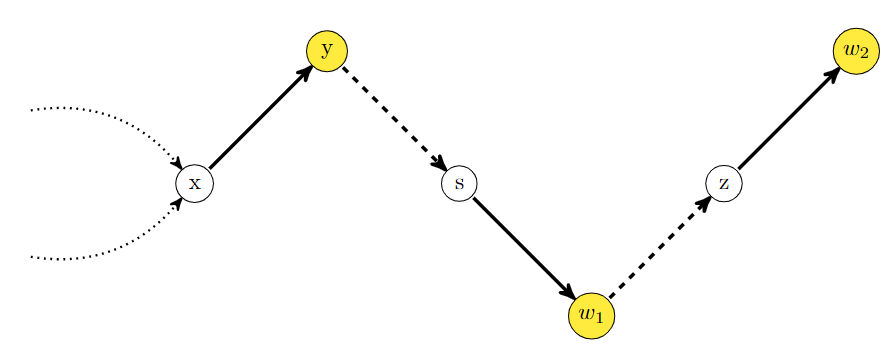}
            \caption{Case 2: $N^+(z)\cap P_{(y,s)}=\emptyset$}
            \label{S4_C2}
        \end{figure}
    \end{itemize}
    
    \noindent Observe that both possibilities lead to a contradiction and hence our assumption that $s$ reaches some sink $z$ in a minimal solution where $x$ is not a sink and $y$ survives, has to be wrong. Thus, for the given instance where Subroutines $1,2$ and $3$ are no longer applicable, in any minimal solution where $x$ is not a sink, $y\in\mathcal{S}_x$ and $s\in R^+(y)$, $s$ has to be a sink. By induction, assuming that $\KFProb(D',\phi')$ returns the optimal solution for every instance smaller than $(D,\phi)$, proves the claim.\qed
\end{proof}

\noindent If $R^+(y)=\emptyset$ then $x$ has to be a sink and the subroutine is executed as a reduction rule without branching. Now, since $x$ has at least two neighbours and as $s\in R^+(x)$ we get $\psi(x)\geq 2.25$. Similarly, since $s$ has at least two neighbours and $x\in R^-(s)$ we get $\psi(s)\geq 2.5$. Hence the worst case branching factor is $(2.25,2.5)$.
    
\section{Running time analysis}
Observe that the reduction Rules $1$, $2$ and $3$ can be applied on any input instance in polynomial time. The branching vector obtained in Subroutine $1$ was $(3.75,0.75)$, which gives the recurrence $f(\mu)\leq f(\mu-3.75)+f(\mu-0.75)$. This solves to $f(\mu)=\mathcal{O}(1.4549^\mu)$. For Subroutine $2$ we observe a worst case drop in measure of $(2.25,3)$ and $(3,3,3)$, amongst which $(3,3,3)$ has the higher running time $f(\mu)=\mathcal{O}(1.4422^\mu)$. Similarly, for Subroutine $3$, we have branching vectors $(2.25,2.25),\ (3.25,2.5,3.25)$, and $(3,3,3)$, out of which $(3.25,2.5,3.25)$ gives the worst running time $f(\mu)=\mathcal{O}(1.4465^\mu)$. Finally Subroutine $4$ has a branching vector $(2.25,2.5)$ which gives $f(\mu)=\mathcal{O}(1.3393^\mu)$.\\

\noindent For an input instance $D$ of {\sc Knot-Free Vertex Deletion}, we initialise the potential of each vertex to $1$ and run KFVD($D,\phi$). Observe that, $\phi(V(D)) = \lvert V(D)\rvert = n$ and during the recursive calls the potential of the instance never increases or drops below 0. Hence we can bound the running time of the algorithm by the run time corresponding to its worst case subroutine (here, Subroutine 1), which gives us the following theorem.\\

\begin{theorem}
 Algorithm ${\KFProb}$ solves {\sc Knot-Free Vertex Deletion}  in $\mathcal{O}^*(1.4549^n)$.
\end{theorem}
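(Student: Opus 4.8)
\noindent The plan is to combine the correctness results of Section~4 with the branching-vector bounds of Section~5 through a standard measure-and-conquer argument, taking as measure $\mu(I) := \phi(V(D))$. The first step is \emph{correctness}, argued by induction on $\mu$. Every recursive call made by Algorithm~\KFProb{} is on an instance of strictly smaller measure: each reduction rule of Subroutine~$0$ deletes at least one vertex (dropping $\mu$ by at least $0.25$), and at a branching node the call $update(D,\phi,-,x)$ demotes the undecided vertex $x$ (dropping $\mu$ by $0.75$) while each call $update(D,\phi,w,-)$ deletes $R(w)\ni w$. Hence the induction is well founded and the algorithm terminates. The base cases (an empty digraph, or $V(D)\subseteq\overline{\mathcal{U}}$, handled by lines~$5$--$6$) are solved directly and correctly. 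For the inductive step, exactly one of Subroutines~$0$--$4$ fires, and Reduction Rules~\ref{rr1}--\ref{rr3} together with Lemmas~\ref{lem:subroutine1}--\ref{lem:subroutine4} certify in each case that the returned value equals $\KFProb(D,\phi)$.

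For the \emph{running-time} bound, the key point is that $\mu$ is monotone non-increasing --- indeed strictly decreasing --- along recursive calls, as noted above, and never drops below $0$, since $update$ only deletes vertices or demotes undecided vertices to semi-decided ones. At a branching node the drop along each branch is exactly what Section~5 computes: in the branch where a chosen undecided vertex $w$ is made a sink, $update(D,\phi,w,-)$ deletes $R(w)$ and demotes every vertex of $R^+(w)\setminus R(w)$, so $\mu$ falls by $\phi(R(w))+0.75\,|R^+(w)\setminus R(w)| = \psi(w)$; in the branch declaring $w$ a non-sink, $\mu$ falls by $0.75$; and in the Subroutine~$2$/$3$ branches where some $s_i\in\mathcal{C}_x$ (resp.\ $B_x$) is made a sink, $\mu$ falls by at least $\psi(s_i)$ plus $0.75$ for each vertex of $NS_i$. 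Feeding in the worst-case branching vectors established in Section~5 --- $(3.75,0.75)$ for Subroutine~$1$; $(2.25,3)$ and $(3,3,3)$ for Subroutine~$2$; $(2.25,2.25)$, $(3,3,3)$ and $(3.25,2.5,3.25)$ for Subroutine~$3$; and $(2.25,2.5)$ for Subroutine~$4$ --- the number $L(\mu)$ of leaves of the recursion tree obeys $L(\mu)\le L(\mu-t_1)+\dots+L(\mu-t_k)$ for whichever vector applies, and since $(3.75,0.75)$ has the largest branching number among all of these, $L(\mu)\le L(\mu-3.75)+L(\mu-0.75)$, which solves to $L(\mu)=\mathcal{O}(1.4549^{\mu})$.

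Finally, the work at each node --- applying the reduction rules and computing $N^{\pm}(\cdot)$, the reachability sets $R^-(\cdot)$ and $R^+(\cdot)$, the sets $\mathcal{S}_x,\mathcal{C}_x,B_x$, the values $\psi(\cdot)$, and the $update$ operation --- is polynomial in $n$, so the total time is $\mathcal{O}^{\star}(1.4549^{\mu})$. Initialising $\phi\equiv 1$ gives $\mu = \phi(V(D)) = |V(D)| = n$, whence the claimed $\mathcal{O}^{\star}(1.4549^n)$ bound. I expect the only genuinely delicate point to be checking that the potential drop in the ``$w$ becomes a sink'' branch equals $\psi(w)$ exactly --- so that Section~5's case analysis, which bounds the various $\psi$ values, really does control the recursion --- together with the monotonicity of $\mu$; both reduce to a direct inspection of the $update$ procedure. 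The remainder is the routine bookkeeping of assembling the per-subroutine bounds and solving the dominant recurrence.
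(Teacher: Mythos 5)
Your proposal is correct and follows essentially the same route as the paper: a measure-and-conquer analysis with measure $\mu=\phi(V(D))$, correctness delegated to the subroutine lemmas of Section~4, the per-branch drops identified with $\psi(\cdot)$ (plus the $NS_i$ demotions), and the dominant recurrence $f(\mu)\le f(\mu-3.75)+f(\mu-0.75)$ solving to $\mathcal{O}(1.4549^{\mu})$ with $\mu=n$ at the root. The extra detail you supply (well-foundedness of the induction and the exact match between the sink-branch drop and $\psi$) is consistent with, and merely elaborates on, the paper's argument.
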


\section{A lower bound on running time}
In this section, we construct a family of graphs and subsequently prove a lower bound on the running time of our algorithm.\\ \vspace{-10pt}

\begin{figure}
    \begin{center}
        \includegraphics[height=1.5in,width=4.5in]{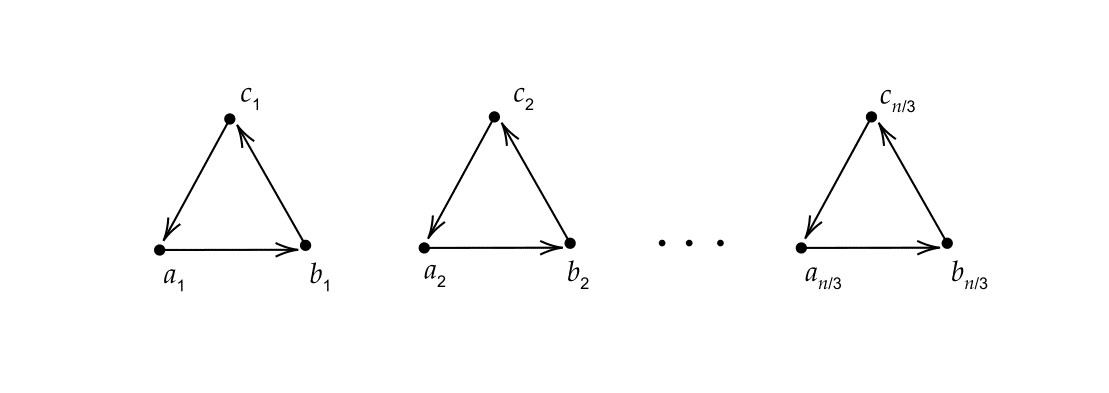}\vspace{-10pt}
        \caption{Illustration of a worst-case instance for our algorithm.}\label{fig:lowerbound}
    \end{center}
\end{figure}\vspace{-10pt}

\noindent We run our algorithm $\KFProb$ on the graph $D$ (Figure \ref{fig:lowerbound} in the appendix), where $V(D)=\{a_i,b_i,c_i\mid 1\leq i\leq \frac{n}{3}\}$ and $E(D)=\{(a_i,b_i),(b_i,c_i),(c_i,a_i)\mid 1\leq i\leq \frac{n}{3}\}$. We claim that in the worst case, our algorithm takes $\mathcal{O}^*(3^{n/3})$ time to solve \KFProb~on $D$ which we prove via adversarial arguments. Initially since $\phi(v) = 1$ for every vertex of $D$, we have $\psi(a_i)=3$, $\psi(b_i)=3$, $\psi(c_i)=3$. Since $\psi(v)<3.75$ and $\mathcal{S}_{v}\neq\emptyset$ for every vertex, Subroutine 1 and 2 are not applicable. The adversary chooses the vertex $a_i$ to branch on. Since $b_i\in\mathcal{S}_{a_i}$ and $b_i,c_i\in R^+(b_i)$ we get one branch where $a_i$ becomes a sink with potential drop 3. Another where $a_i$ is not a sink, but $b_i$ is, with drop 3. Finally, one where $a_i,b_i$ are not sinks, but $c_i$ is with drop 3. In all of the above branches, $\{a_i,b_i,c_i\}$ is removed, giving a recurrence relation $T(n)=3T(n-3)$ which implies $T(n)$ is $3^{n/3}\geq 1.4422^n$ for this instance. Which gives us the following theorem.\\

\begin{theorem}
    Algorithm $\KFProb$ runs in time $\Omega(1.4422^n)$.
\end{theorem}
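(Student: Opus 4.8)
The plan is to exhibit an explicit infinite family of input digraphs on which Algorithm \KFProb, under a suitable adversarial tie-breaking, performs a branching tree of size $3^{n/3}$, and then observe $3^{n/3} = (3^{1/3})^n > 1.4422^n$. The family is the disjoint union of $n/3$ directed triangles: $V(D) = \{a_i, b_i, c_i : 1 \le i \le n/3\}$ with arcs $(a_i,b_i),(b_i,c_i),(c_i,a_i)$. First I would verify that on this instance the algorithm is forced into Subroutine $3$ (or $2$): every vertex has potential $1$ initially, and since each vertex lies on a triangle with no source and no sink, Reduction Rules and Subroutine $0$ do not fire. I would then compute $\psi(a_i)$: we have $N^+(a_i) = \{b_i\}$, $R^-(a_i) = \{a_i, c_i\}$ (since deleting $b_i$ leaves the path $c_i \to a_i$ and nothing else can reach $a_i$), so $R(a_i) = \{a_i, b_i, c_i\}$, and $R^+(a_i)$: which undecided vertices reach $a_i$ after their out-neighbours are removed — this is just $\{a_i, c_i\}$ again, contributing nothing new to $R^+(a_i)\setminus R(a_i)$. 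Hence $\psi(a_i) = \phi(\{a_i,b_i,c_i\}) = 3 < 3.75$, so Subroutine $1$ does not apply; and $\mathcal{S}_{a_i} = \{b_i\}$ is nonempty (the only undecided in-neighbour of $b_i$ is $a_i$), so Subroutine $2$ does not apply either, and the algorithm enters Subroutine $3$ with the adversary free to pick $x = a_i$.

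Next I would trace the branching at $x = a_i$ inside Subroutine $3$. With $y = b_i \in \mathcal{S}_x$, we get $R^+(y) = R^+(b_i) = \{b_i, c_i\}$ (deleting $N^+(b_i) = \{c_i\}$ leaves $a_i \to b_i$, so $b_i$ reaches itself; and $c_i$ reaches $b_i$ via $a_i$ after deleting $N^+(c_i) = \{a_i\}$? — careful: deleting $a_i$ disconnects $c_i$ from $b_i$, so actually $c_i \notin R^+(b_i)$; it is $\{b_i\}$ plus whatever — I would recheck this and, if $R^+(b_i)$ has size $1$, instead pick a labelling so that the chosen survivor's out-reachability set has two elements, e.g. the adversary picks $x=a_i$, $y=b_i$, and notes $b_i, c_i \in R^+(b_i)$ because from $c_i$, after removing $N^+(c_i)=\{a_i\}$, there is still the arc $c_i \to b_i$? no, $c_i \to b_i$ is not an arc. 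The actual computation: $R^+(b_i) = \{b_i, a_i\}$? From $a_i$, removing $N^+(a_i)=\{b_i\}$ kills the $a_i \to b_i$ arc, so $a_i$ cannot reach $b_i$. So $R^+(b_i)=\{b_i\}$ only.) Given this, I would follow the paper's own stated trace: the adversary obtains three branches — $a_i$ a sink (drop $3$, removing $R(a_i)=\{a_i,b_i,c_i\}$), $a_i$ not a sink but $b_i$ a sink (drop $3$), and $a_i,b_i$ not sinks but $c_i$ a sink (drop $3$) — each branch deleting the whole triangle. The key point to nail down is that in each branch the entire triangle component is removed (via $R(\cdot)$ of the sink vertex together with the semi-decided shifts), so the residual instance is again a disjoint union of directed triangles with all potentials $1$, and the recursion repeats identically on the remaining $n/3 - 1$ triangles.

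This yields the recurrence $T(n) = 3\,T(n-3)$ with $T(0) = O(1)$, hence $T(n) = \Omega(3^{n/3})$, and since $3^{1/3} = 1.44224\ldots > 1.4422$, the algorithm runs in time $\Omega(1.4422^n)$ on this family, which proves the theorem. The main obstacle I anticipate is precisely the bookkeeping of $R^+$, $R^-$, $\mathcal{S}_x$, $\mathcal{C}_x$ and $B_x$ on the triangle gadget: I must confirm that the adversary can genuinely force a three-way branch (i.e. that $B_x$, after the domination-pruning step removing $s_j$ with $N^+(s_j) \subseteq N^+(s_i)$, still has the right cardinality to produce three recursive calls, counting the $x$-is-a-sink branch), and that in each of the three branches the residual graph is exactly the union of the untouched triangles — no vertices are left behind to create a source/sink that would let a cheaper reduction rule kick in. Once that is established the arithmetic is immediate. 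I would also explicitly note that the algorithm's worst-case upper-bound subroutine is Subroutine $1$ with branching vector $(3.75,0.75)$ giving $1.4549^n$, so there is no contradiction: the lower bound $1.4422^n$ does not match the upper bound, and closing that gap is left open.
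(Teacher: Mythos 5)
Your overall plan is the same as the paper's (the disjoint union of $n/3$ directed triangles, a forced three-way branch per triangle, $T(n)=3T(n-3)$, hence $\Omega(3^{n/3})\supseteq\Omega(1.4422^n)$), but there is a genuine gap at exactly the point you flag and do not resolve: the computation of $R^+(b_i)$. You compute it as $\{b_i\}$, and that error comes from applying the definition backwards. By the paper's definition, $u\in R^+(v)$ if and only if $v\in R^-(u)$, i.e.\ $v$ can reach $u$ in $D-N^+(u)$; you instead tested whether $u$ can reach $v$ after deleting $N^+(u)$ (and, for $a_i$, whether $b_i$ can reach $a_i$ after deleting $N^+(a_i)$). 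The correct computation on the triangle $a_i\to b_i\to c_i\to a_i$ is: $c_i\in R^+(b_i)$ because the arc $b_i\to c_i$ survives the deletion of $N^+(c_i)=\{a_i\}$, and $b_i\in R^+(b_i)$ trivially, so $R^+(b_i)=\{b_i,c_i\}$; moreover $N^+(b_i)=\{c_i\}$ and $N^+(c_i)=\{a_i\}$ are incomparable, so the domination-pruning step removes nothing and $B_x=\{b_i,c_i\}$ with $|B_x|=2$.

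This matters because the size of $B_x$ is precisely what determines the branching width. With your computed $R^+(b_i)=\{b_i\}$ you would only get a two-way branch ($a_i$ a sink, or $b_i$ a sink), each deleting the triangle, giving $T(n)=2T(n-3)=\Omega(2^{n/3})\approx\Omega(1.26^n)$, which does not prove the stated bound; appealing to ``the paper's stated trace'' is not available in a self-contained argument. Once $B_x=\{b_i,c_i\}$ is established, the rest of your write-up goes through as in the paper: in the three branches ($a_i$ sink; $b_i$ sink; $b_i$ non-sink and $c_i$ sink) the deleted set $R(\cdot)$ is the whole triangle, so each branch drops the instance by exactly one triangle, the residual graph is again a disjoint union of triangles with all potentials $1$ (the triangles do not interact, so no reduction rule fires and no potentials of other triangles change), and the recurrence $T(n)=3T(n-3)$ with $3^{1/3}>1.4422$ completes the proof. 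You should also record, as you began to, that $\psi(a_i)=3<3.75$ and $\mathcal{S}_{a_i}=\{b_i\}\neq\emptyset$ with $|N(a_i)\cap\mathcal{U}|=2\geq 1$, so the adversary indeed lands in Subroutine~3 rather than Subroutines~1, 2 or~4.
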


\section{Upper \& lower bounds for the number of minimal knot-free vertex deletion sets}
We claim that if we run our algorithm on any given directed graph, and create a decision tree then for every minimal knot-free vertex deletion set there exists a leaf node of the decision tree which corresponds to it. Note that any algorithm which finds all the minimal solutions needs at least unit time to find each solution and hence the number of minimal solutions for any graph of size $n$ cannot exceed the complexity of the algorithm. This observation gives us the following theorem.
 
\begin{theorem}
    The number of inclusion-wise minimal knot-free vertex deletion sets is $\mathcal{O}^*(1.4549^n)$.
\end{theorem}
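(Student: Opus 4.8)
The plan is to turn the informal remark preceding the theorem into a rigorous argument by viewing the recursion tree of Algorithm~\ref{alg_1}, run on $D$ with the potential initialised to $\phi\equiv 1$, as a decision tree $\mathcal{T}$. First I would observe that with $\phi\equiv 1$ the sink‑potential constraint in the definition of a feasible solution is vacuous, so the minimal feasible solutions of $(D,\phi)$ are exactly the inclusion‑wise minimal knot‑free vertex deletion sets of $D$, which is what we want to count. Next I would attach to every leaf $\ell$ of $\mathcal{T}$ a set $\widehat{S}(\ell)\subseteq V(D)$, namely the union of $N^+(w)$ taken over all vertices $w$ declared to be sinks along the root‑to‑$\ell$ path, together with $V(D_\ell)$, where $(D_\ell,\phi_\ell)$ is the instance at $\ell$; since every leaf of $\mathcal{T}$ is an application of Reduction Rule~\ref{rr1}, the instance $D_\ell$ is fully semi‑decided with no source or sink, and these pieces are pairwise disjoint because a vertex declared a sink is deleted together with all of $R(w)\supseteq N^+(w)$ and never reappears. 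The main claim to prove is then: \emph{for every minimal knot‑free vertex deletion set $S$ of $D$ there is a leaf $\ell$ with $\widehat{S}(\ell)=S$}. Granting this, the map $S\mapsto\ell$ is injective (because $\widehat{S}(\cdot)$ is a function), so the number of minimal solutions is at most the number of leaves of $\mathcal{T}$, and the running‑time analysis of Section~5 already bounds the number of leaves by $\mathcal{O}^\star(1.4549^n)$ — the worst recurrence $f(\mu)\le f(\mu-3.75)+f(\mu-0.75)$ from Subroutine~1 dominates, and the single‑child nodes produced by Reduction Rules~\ref{rr2} and~\ref{rr3} contribute only a polynomial factor — which yields the theorem.

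The heart of the proof is an induction down $\mathcal{T}$. Fix a minimal knot‑free vertex deletion set $S$ with sink set $Z$ in $D-S$. I would maintain, along a carefully chosen root‑to‑leaf path, the invariant that at the depth‑$t$ node, with instance $(D^t,\phi^t)$ and $S^t:=S\cap V(D^t)$, the set $S^t$ is a \emph{minimal feasible} solution of $(D^t,\phi^t)$ whose sink set in $D^t-S^t$ consists entirely of undecided vertices, and $S$ is the disjoint union of $S^t$ with the accumulated out‑neighbourhoods of the sinks declared in the first $t$ branching steps. The branch to descend into is dictated by $Z$: at a Subroutine~1 node on $x$ we take the ``$x$ is a sink'' child if $x\in Z$ and the ``$x$ non‑sink'' child otherwise (in the latter case $D^{t+1}=D^t$, $S^{t+1}=S^t$ is still feasible since $x\notin Z$, and still minimal because the class of feasible solutions only shrank); at Subroutine~2, 3, or 4 nodes on $x$, the structural facts proved inside Lemmas~\ref{lem:subroutine2}--\ref{lem:subroutine4} — that every minimal solution places a sink into $\{x\}\cup\mathcal{C}_x$, respectively into $\{x\}\cup B_x$, respectively falls into the two‑way split on $\{x\}\cup R^+(y)$ — tell us that at least one listed branch is consistent with $Z$; for Subroutine~3 one additionally uses the argument in the proof of Lemma~\ref{lem:subroutine3} that whenever a sink of $S$ lies in the discarded set $D_x=R^+(y)\setminus B_x$, a $B_x$‑vertex that dominates it is forced to be a sink of $S$ too, so $Z\cap B_x\neq\emptyset$, and we descend via the first member of $Z\cap B_x$ in the fixed ordering (the earlier members flagged in $NS_i$ genuinely lie outside $Z$). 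Whenever we descend into a ``$w$ is a sink'' child, Proposition~\ref{prop:2} gives that $S^{t+1}=S^t\setminus N^+(w)$ is a minimal feasible solution of $update(D^t,\phi^t,w,NS)$, while Proposition~\ref{claim:1} and Corollary~\ref{cor1} show that the remaining sinks $Z\cap V(D^{t+1})$ survive the deletion of $R(w)$ and keep potential $1$ (they avoid $R^-(w)$, avoid $R^+(w)$, and are not out‑neighbours of $w$); the extra potential drops imposed by $NS$ do not affect solutionhood and are compatible with $Z$ by the choice above. At the leaf, $(D_\ell,\phi_\ell)$ is fully semi‑decided with no source or sink, so by Reduction Rule~\ref{rr1} its unique minimal feasible solution is $V(D_\ell)$; hence $S^\ell=V(D_\ell)$, and unwinding the invariant gives $S=\widehat{S}(\ell)$.

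I expect the main obstacle to be the bookkeeping at the branching nodes, specifically two points. The first is Subroutine~3's pruning of dominated out‑neighbourhoods: one must verify that the set $B_x$ never discards \emph{all} of the sinks of $S$ reachable from the surviving out‑neighbour $y$, which requires a short case analysis of ties among out‑neighbourhoods showing that exactly one representative of each equality class remains in $B_x$ and that this representative is forced into $Z$ as soon as the class meets $Z$. The second is checking, in every ``sink'' branch, that $S^{t+1}$ is not merely a solution but a \emph{minimal} one and that its sink set stays entirely undecided after the simultaneous vertex deletions of $R(w)$ and the potential drops on $R^+(w)$ and on $NS$; this is exactly where Propositions~\ref{claim:1} and~\ref{prop:2} (and Corollary~\ref{cor1}) do the work. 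The remaining ingredients — the base case, the non‑sink branches, and the final counting step — are routine.
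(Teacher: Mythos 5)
Your proposal is correct and follows essentially the same route as the paper: view the algorithm's recursion tree as a decision tree, descend from the root guided by the sink set $Z_{min}$ of the given minimal solution (using the correctness lemmas to guarantee each branching set meets $Z_{min}$), conclude that the solution computed at the resulting leaf coincides with $S_{min}$ by minimality, and bound the number of leaves by the $\mathcal{O}^\star(1.4549^n)$ running-time analysis. Your version merely spells out the inductive invariant and the Subroutine~3 domination/ordering bookkeeping that the paper handles by appealing to the already-proved correctness of the subroutines.
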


\begin{proof}
    Let $S_{min}$ be a minimal solution and $Z_{min}$ be the set of sinks in $D-S_{min}$. Beginning with the root of decision tree use the following set of rules to find the corresponding leaf node. If the node corresponds to an execution of subroutine 1 on a vertex $v$, then if $v\in Z_{min}$ choose the branch of the tree where $v$ is added to the sink set, else choose the branch where $v$ is labelled as a non-sink vertex.  If the node corresponds to any other subroutine, then by correctness of the algorithm proven in the earlier section, at that node it branches on a set of vertices, which intersects the sink set of any minimal solution. Choose a branch corresponding to a vertex $s$ such that $s\in Z_{min}$. Follow this procedure to get to a leaf node which corresponds to a knot-free vertex deletion set $S_{leaf}$ and sink set $Z_{leaf}$. Note that due to our choice of leaf node, we have $Z_{leaf}\subseteq Z_{min}$ and consequently $S_{leaf}\subseteq S_{min}$. This along with minimality of $S_{min}$ gives $S_{leaf}=S_{min}$. Hence the number of minimal knot-free vertex deletion sets of $D$ cannot exceed the number of leaf nodes of the decision tree corresponding to any run of the algorithm on $D$. Hence maximum number of minimal knot-free vertex deletion sets is $\mathcal{O}^*(1.4549^n)$.\qed
\end{proof}

\begin{theorem}
    There exists an infinite family of graphs with $\Omega(1.4422^n)$ many inclusion-wise minimal knot-free vertex deletion sets.
\end{theorem}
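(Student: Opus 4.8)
The plan is to reuse the family of graphs $D$ from Section~6, the disjoint union of $n/3$ directed triangles with vertex set $V(D)=\{a_i,b_i,c_i\mid 1\le i\le n/3\}$ and arc set $E(D)=\{(a_i,b_i),(b_i,c_i),(c_i,a_i)\mid 1\le i\le n/3\}$ (so $n$ ranges over multiples of $3$). First I would observe that each triangle $C_i=\{a_i,b_i,c_i\}$ is itself a knot: it is a strongly connected component of size $3$ and, since $D$ is a disjoint union of these triangles, no arc leaves $V(C_i)$. Hence, by the definition of a knot, every knot-free vertex deletion set of $D$ must contain at least one vertex of each $C_i$.

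Next I would pin down which sets are inclusion-wise minimal. For the ``upper'' direction, deleting exactly one vertex of each $C_i$ already produces a knot-free graph: removing, say, $b_i$ leaves only the arc $(c_i,a_i)$ on $\{a_i,c_i\}$, so $a_i$ is a sink reachable from $c_i$, and by Proposition~\ref{prop:1} (applied triangle by triangle) the whole graph is knot-free. For the ``lower'' direction, any knot-free vertex deletion set $S$ that contains two or more vertices of some $C_i$ is not minimal: after reinserting one such vertex, that triangle has a single remaining vertex, which is an isolated sink, so the graph stays knot-free, while the other triangles are untouched; thus a proper subset of $S$ is still a solution. Combining this with the previous paragraph, the minimal knot-free vertex deletion sets of $D$ are exactly the sets that select precisely one vertex from each of the $n/3$ triangles, and every such selection is indeed minimal since deleting any chosen vertex restores a full triangle, i.e.\ a knot.

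Finally, I would count: there are exactly $3^{n/3}$ ways to pick one vertex from each triangle, so $D$ has exactly $3^{n/3}$ inclusion-wise minimal knot-free vertex deletion sets. Since $3^{1/3}=1.44224\ldots>1.4422$, this is $3^{n/3}>1.4422^{\,n}$, which establishes the claimed infinite family. The argument is essentially bookkeeping; the only step that needs any care is the minimality (``at most one per triangle'') direction, and even that follows immediately from the observation that a lone leftover vertex of a triangle is a sink. I would also remark that this family is the same one used for the running-time lower bound in Section~6, so the bound is consistent with the decision-tree picture there, and it nearly matches the $\mathcal{O}^*(1.4549^n)$ upper bound of the preceding theorem, leaving only the small gap between $1.4422$ and $1.4549$.
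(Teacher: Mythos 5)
Your proposal is correct and follows essentially the same route as the paper: it uses the same disjoint-union-of-directed-triangles family, argues that each set picking exactly one vertex per triangle is a minimal knot-free vertex deletion set (any proper subset misses some triangle entirely, leaving a knot), and counts $3^{n/3}\ge 1.4422^{n}$ such sets. The additional step characterizing \emph{all} minimal sets (at most one vertex per triangle) is not needed for the lower bound, and its justification is slightly imprecise when exactly two vertices of a triangle lie in $S$ (reinserting one leaves two vertices of that triangle, not a single isolated sink, though the graph is still knot-free), but this does not affect the correctness of the theorem.
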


\begin{proof}
    Consider the graph in Figure \ref{fig:lowerbound}. Each strongly connected component $\{a_i,b_i,c_i\}$ can be made knot-free by deleting a single vertex. Hence every set $S$ which contains only one element from $\{a_i,b_i,c_i\}$ is a knot-free vertex deletion set. Observe that there are $3^{\frac{n}{3}}$ many of them since we can choose the element for each $i$ in 3 ways and $i$ ranges from $1$ to $\frac{n}{3}$. Further, any proper subset $S'$ of such a set will not intersect $\{a_i,b_i,c_i\}$ for some $i$, leaving $D-S'$ with at least one knot. Hence the graph in Figure \ref{fig:lowerbound} has at least $3^{\frac{n}{3}}\geq 1.4422^n$ many minimal knot-free vertex deletion sets. Now, by taking graphs which are disjoint union of triangles, we obtain an infinite family of graphs such that each element of that family has at least $1.4422^n$ many minimal knot-free vertex deletion sets.\qed
\end{proof}

\section{Conclusion}
    
We obtain a $\mathcal{O}(1.4549^n)$ time algorithm for the \KFProb problem which uses polynomial space. We also obtain an upper bound of $\mathcal{O}(1.4549^n)$ on the number of minimal knot-free vertex deletion sets possible for any directed graph and present a family of graphs which have $1.4422^n$ many minimal knot-free vertex deletion sets. Our algorithm is not proven to be optimal and improving it is a possible direction for future work. Closing the gap between the upper and lower bound for the maximum number of knot-free vertex deletion sets is also of interest.

\section*{Acknowledgements} 
\noindent We are thankful to Ajinkya Gaikwad for useful discussions and his comments on  Algorithm \ref{algo1}. Ajaykrishnan E S would like to thank DST-INSPIRE for their support via the Scholarship for Higher Education (SHE) programme.
\bibliography{my}

\end{document}